\documentclass[english]{article}
\usepackage[T1]{fontenc}
\usepackage[latin9]{inputenc}
\usepackage{amsmath, amssymb, amsfonts}
\usepackage{amstext}
\usepackage{amsthm}
\usepackage{esint}
\usepackage{graphicx,color}
\usepackage{comment}
\usepackage{geometry}
\geometry{verbose,lmargin=1.25in,rmargin=1.25in}

\newcommand{\Tr}{\mathrm{Tr}}

\newcommand{\mc}[1]{\mathcal{#1}}

\newcommand{\ud}{\,\mathrm{d}}

\newcommand{\RR}{\mathbb{R}}
\newcommand{\CC}{\mathbb{C}}

\global\long\def\ve{\varepsilon}
\global\long\def\R{\mathbb{R}}
\global\long\def\Rn{\mathbb{R}^{d}}
\global\long\def\Rd{\mathbb{R}^{d}}

\global\long\def\vp{\varphi}
\global\long\def\ra{\rightarrow}

\global\long\def\dom{\mathrm{dom}\,}
\global\long\def\intdom{\mathrm{int}\,\mathrm{dom}\,}
\global\long\def\Tr{\mathrm{Tr}}

\numberwithin{equation}{section}
\numberwithin{figure}{section}
\newtheorem{thm}{\protect\theoremname}
\newtheorem{lem}[thm]{\protect\lemmaname}
\newtheorem{rem}[thm]{\protect\remarkname}
\newtheorem{prop}[thm]{\protect\propositionname}

\newtheorem{assumption}[thm]{Assumption}

\numberwithin{thm}{section}

\providecommand{\corollaryname}{Corollary}
\providecommand{\lemmaname}{Lemma}
\providecommand{\propositionname}{Proposition}
\providecommand{\remarkname}{Remark}
\providecommand{\theoremname}{Theorem}

\allowdisplaybreaks

\title{Sparsity pattern of the self-energy for \\ classical and quantum impurity
problems}
\author{
Lin Lin\thanks{Department of Mathematics, University of California, Berkeley, and Computational Research Division, Lawrence Berkeley National Laboratory, Berkeley, CA 94720. Email: \texttt{linlin@math.berkeley.edu}}
\and Michael Lindsey\thanks{Department of Mathematics, Courant Institute of Mathematical Sciences, New York University, New York, NY 10012. Email: \texttt{michael.lindsey@cims.nyu.edu}}
}
\date{}

\begin{document}
\maketitle

\begin{abstract}
  We prove that for various impurity models, in both classical and
  quantum settings, the self-energy matrix is a sparse matrix with a
  sparsity pattern determined by the impurity sites. 
  In the quantum setting, such a sparsity pattern has been known since
  Feynman. Indeed, it underlies several numerical methods for solving
  impurity problems, as well as many approaches to more general quantum many-body problems, such as the dynamical mean field theory. The sparsity pattern is easily motivated by a formal perturbative expansion using Feynman diagrams. However, to the extent of our knowledge, a
  rigorous proof has not appeared in the literature. In the classical 
  setting, analogous considerations lead to a perhaps less-known result, i.e., that the precision
  matrix of a Gibbs measure of a certain kind differs only by a sparse matrix
  from the precision matrix of a corresponding Gaussian measure. Our argument for this result mainly involves elementary algebraic manipulations and is in particular non-perturbative.  Nonetheless, the proof can be robustly
  adapted to various settings of interest in physics, including
  quantum systems (both fermionic and bosonic) at zero and finite temperature, non-equilibrium
  systems, and superconducting systems.
\end{abstract}

\pagestyle{myheadings}
\thispagestyle{plain}

\section{Introduction}
\label{sec:intro}

Consider the second-moment matrix $G\in \RR^{d\times d}$ of a Gibbs measure 
defined by a Hamiltonian $H  :\R^d \ra \R$, 
i.e., 
\begin{equation}
  G = \frac{1}{Z} \int_{\RR^{d}} xx^T e^{-H(x)} dx.
  \label{eqn:green}
\end{equation}
Here the partition function 
\[
Z=\int_{\RR^{d}} e^{-H(x)} dx
\]
is the appropriate normalization factor.

We will write $H$ in the form $H = H_0 + U$, 
where $H_0 = \frac12 x^T A x$ is a quadratic form.
Assume that $A$ and $U$ are 
such that both $Z$ and $G$ are finite. Via the analogy with quantum many-body physics 
that will be discussed below, we refer to $U$ as the \textit{interacting} part of 
the Hamiltonian, or simply the \textit{interaction}. 
Meanwhile $H_0$ represents the \textit{non-interacting} part.

If $U \equiv 0$ 
and $A$ is a positive definite matrix, then immediately we have
$G=A^{-1}$. One seeks a generalization of this fact to the case in which $U(x)$ depends only on a subset of the variables. 
We refer to this setting as the (classical) \emph{impurity model}, by analogy to the quantum impurity model 
to be discussed below.
Perhaps surprisingly, we have the following result:

\begin{prop}\label{thm:sparsegibbs}
  Let $p\leq d$, and let $A\in \RR^{d\times d}$ be a symmetric matrix
  whose lower-right $(d-p)\times(d-p)$ block is positive definite. Let
  $U:\RR^{d} \rightarrow \RR$ be a function that depends only on its
  first $p$ arguments, i.e., $U(x)=U_1 (x_1,\ldots,x_p)$ for some
  $U_1:\RR^p \rightarrow \RR$, and assume that $U_1$ satisfies
  sufficient growth conditions such that that the Gibbs measure with density proportional to $e^{-\frac12 x^{T} A x - U(x)}$ has finite second-order moments. Then, with $G$ defined as in \eqref{eqn:green}, 
  \begin{equation*}
    \Sigma := A - G^{-1} = \left(\begin{array}{cc}
 \Sigma_p & 0 \\
 0 & 0
\end{array}\right),
    \label{}
  \end{equation*}
  where $\Sigma_{p}\in \RR^{p\times p}$ is a symmetric matrix.
\end{prop}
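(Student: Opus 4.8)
The plan is to reduce to the case $p=d$ by linearly decoupling the non-interacting variables via a Gaussian completion of the square. First I would split $x=(y,z)$ with $y\in\RR^{p}$ and $z\in\RR^{d-p}$, and block $A$ accordingly as
\[
A=\begin{pmatrix} A_{yy} & A_{yz}\\ A_{zy} & A_{zz}\end{pmatrix},
\]
with $A_{zz}$ positive definite (hence invertible) and $A_{zy}=A_{yz}^{T}$. Completing the square in $z$ gives $\tfrac12 x^{T}Ax=\tfrac12 y^{T}Sy+\tfrac12\zeta^{T}A_{zz}\zeta$, where $S:=A_{yy}-A_{yz}A_{zz}^{-1}A_{zy}$ is the Schur complement and $\zeta:=z+A_{zz}^{-1}A_{zy}y$. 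The map $(y,z)\mapsto(y,\zeta)$ has unit Jacobian, and since $U$ depends only on $y$ it turns the Gibbs density into the product $\tfrac1Z\,e^{-\frac12 y^{T}Sy-U_{1}(y)}\,e^{-\frac12\zeta^{T}A_{zz}\zeta}$. Thus, in the new coordinates, $y$ and $\zeta$ are independent; $\zeta\sim\mathcal N(0,A_{zz}^{-1})$ is centered; and $y$ follows the $p$-dimensional Gibbs measure with Hamiltonian $\tfrac12 y^{T}Sy+U_{1}(y)$, whose second-moment matrix $G_{y}:=\EE[yy^{T}]$ is finite by the moment hypothesis and positive definite because that density is everywhere positive.

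Next I would transport this back to $G$. Writing $x=Tw$ with $w=(y,\zeta)$ and $T=\begin{pmatrix} I & 0\\ -A_{zz}^{-1}A_{zy} & I\end{pmatrix}$, independence together with $\EE[\zeta]=0$ gives $\EE[ww^{T}]=\mathrm{diag}(G_{y},A_{zz}^{-1})$, so that $G=T\,\mathrm{diag}(G_{y},A_{zz}^{-1})\,T^{T}$. In particular $G$ is invertible, and computing $G^{-1}=T^{-T}\mathrm{diag}(G_{y}^{-1},A_{zz})T^{-1}$ block-by-block yields the key identity
\[
G^{-1}=\begin{pmatrix} G_{y}^{-1}+A_{yz}A_{zz}^{-1}A_{zy} & A_{yz}\\ A_{zy} & A_{zz}\end{pmatrix}.
\]
Subtracting from $A$, the lower-right and off-diagonal blocks cancel exactly, so $\Sigma=A-G^{-1}=\mathrm{diag}(\Sigma_{p},0)$ with $\Sigma_{p}=S-G_{y}^{-1}$, which is symmetric.

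I do not expect a serious obstacle here: the entire argument is elementary linear algebra plus the Gaussian integral, and is manifestly non-perturbative. The two points that need a little care are (i) checking that $G_{y}$, equivalently $G$, is invertible --- which I would get from positivity of the density on $\RR^{p}$; and (ii) the block bookkeeping, in particular tracking the relations $A_{zz}B=-A_{zy}$ and $B^{T}A_{zz}=-A_{yz}$ for $B:=-A_{zz}^{-1}A_{zy}$, which are exactly what make the cancellation work. I would also remark that the same computation shows $\Sigma_{p}=A_{yy}-A_{yz}A_{zz}^{-1}A_{zy}-G_{y}^{-1}$ is generically nonzero, so the stated sparsity pattern is sharp, and that this derivation is the template to be adapted to the quantum and non-equilibrium settings later in the paper.
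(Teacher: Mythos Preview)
Your proof is correct and shares the paper's key idea: complete the square in the bath variables and change coordinates to $\zeta = z + A_{zz}^{-1}A_{zy}y$, thereby decoupling the fragment from a centered Gaussian bath. From that point the two arguments diverge only in bookkeeping. The paper verifies the sparsity by computing the blocks $(GA)_{12}$ and $(GA)_{22}$ directly via integration and checking that they equal $0$ and $I_{d-p}$; you instead package the decoupling as the factorization $G = T\,\mathrm{diag}(G_y, A_{zz}^{-1})\,T^{T}$ and invert. Your organization is slightly cleaner and delivers as a by-product the explicit formula $\Sigma_p = S - G_y^{-1}$, identifying the full self-energy with that of the reduced $p$-dimensional Gibbs model whose quadratic part is the Schur complement $S$. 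The paper does not extract this identity in its proof, though it is exactly what one expects from the Luttinger--Ward projection rule discussed in the introduction.
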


In fact, Proposition \ref{thm:sparsegibbs} can be generalized by considering an arbitrary measure $d\mu_1 (\mathbf{x}_1)$ of sufficient 
decay in the place of $e^{-U_1(\mathbf{x}_1)}\, d \mathbf{x}_1$, where we denote
$\mathbf{x}_1 = (x_1,\ldots,x_p)^{T}$ and 
$\mathbf{x}_2 = (x_{p+1},\ldots,x_d)^{T}$. In this setting the partition function is defined 
\[
  Z =  \int_{\R^{p}} \int_{\R^{d-p}}  e^{-\frac12 x^T A x}\, d\mathbf{x}_2 \, d\mu_1 (\mathbf{x}_1),
\]
and the Green's function is defined accordingly.
The case
\[
\mu_1 (\mathbf{x}_1) = e^{-\sum_{i,j=1}^p J_{ij} x_i x_j} \sum_{\mathbf{\sigma}\in \{-1,1\}^p} \delta(\,\cdot\, - \mathbf{\sigma}\,)
\]
defines a notion of a classical impurity model for spin systems, in which a spin system 
is coupled to a Gaussian `bath.' For such a spin impurity model, we can assume
without loss of generality  that the upper-left $p\times p$ block 
of $A$ is zero, and the ensemble is specified by the partition function 
\[
  Z = \sum_{\mathbf{\sigma}\in \{-1,1\}^p} e^{-\frac12 \sum_{i,j=1}^p J_{ij} \sigma_i \sigma_j} 
 \int_{\R^{d-p}}   e^{- \frac{1}{2} y^T A_{22} y - y^T A_{21} \sigma} 
  \, dy,
\]
where $A_{21}$ and $A_{22}$ denote the appropriate blocks of $A$.
We will stick to the original setting, in which the impurity is specified by a function $U_1$, to 
emphasize the analogy with the setting of the quantum many-body problem, but we comment 
that the proof of 
Proposition \ref{thm:sparsegibbs} is exactly the same in this broader context.

In statistics, $G^{-1}$ is sometimes called the precision matrix. In our setting, if 
$A$ is positive definite and $U\equiv 0$, then $A$ is the precision matrix 
of the distribution in question. Hence Proposition~\ref{thm:sparsegibbs} states that
the difference of the precision matrices in the `interacting' and `non-interacting' settings, namely $A-G^{-1}$, is a
\textit{sparse} matrix if the interaction $U$ only depends on a subset of variables.
The proof of the theorem is non-perturbative, and in fact $A$ need not be positive definite (though, 
when $U$ is independent of the last $d-p$ variables, the lower-right $(d-p)\times(d-p)$ block of $A$ 
must be positive definite to ensure that $e^{-\frac12 x^{T} A x - U(x)}$ is normalizable).
To the best of our knowledge, other than from the
perspective of the Luttinger-Ward formalism to be discussed
later~\cite{LuttingerWard1960,LinLindsey2018,LinLindsey_diagram1,LinLindsey_diagram2},
this basic linear-algebraic fact about Gibbs measures was not
previously present in the literature.

As a matter of fact, we first observed a result of this type in a more complex
setting, namely that of quantum impurity problems at zero temperature (as we shall discuss
below, the analogous result is also true at finite temperature). Consider
the Hamiltonian, denoted by $\hat{H}$, for a system of interacting fermions or bosons. Throughout 
we shall distinguish the cases of fermions and bosons via a parameter $\zeta$ given by $\zeta = -1$ 
in the case of fermions and $\zeta = +1$ in the case of bosons. In the second-quantized
representation~\cite{FetterWalecka2003}, $\hat{H}$ can be
generally written as $\hat{H}= \hat{H}_{0} + \hat{U}$, where 
\begin{equation}
  \hat{H}_{0}=\sum_{i,j=1}^{d} h_{ij} a_{i}^{\dagger} a_{j}
  \label{}
\end{equation}
is viewed as the Hamiltonian for a system of non-interacting fermions or bosons.
Here $a_{i}^{\dagger},a_{j}$ are called the creation and annihilation
operators, respectively, and $h\in \CC^{d\times d}$ is a Hermitian
matrix (in Appendix \ref{sec:secondq} we provide a brief introduction of 
the second-quantized representation).

Meanwhile, $\hat{U}$ is the interacting part of the Hamiltonian. Although $\hat{U}$ 
can be far more general, usually we have in mind the two-body interaction
\begin{equation}
  \hat{U} = \sum_{i,j,k,l} (ij\vert U \vert kl) a^{\dagger}_{i}
  a^{\dagger}_{j} a_{l} a_{k}.
  \label{}
\end{equation}
In this case, if there exists $p<d$ so that $(ij\vert U \vert kl) \ne
0$ only if $i,j,k,l\in \left\{ 1,\ldots,p \right\}$, then we call the Hamiltonian
$\hat{H}$ an impurity Hamiltonian. More generally, 
we say that $\hat{H}$ is an impurity Hamiltonian if $\hat{U}$ can be written 
as a polynomial of the creation and annihilation operators $a_i^\dagger$ and $a_i$ 
for $i=1,\ldots,p$ and is particle-number-conserving (see Appendix \ref{sec:secondq} for details).

The quantum impurity problem arose as a model for magnetic impurities in metals~\cite{Anderson}. More recently, the impurity problem has assumed a role of central importance in the dynamical mean field theory (DMFT)~\cite{GeorgesKotliarKrauthEtAl1996,KotliarSavrasovHauleEtAl2006} and its
extensions~\cite{WolfGoMcCullochEtAl2015,LichtensteinKatsnelson2000}, which all in fact concern \emph{general} (i.e., non-impurity) quantum systems and in particular have distinguished themselves in the study of
strongly correlated fermionic systems that are difficult to treat by other means. As a `quantum embedding' method, DMFT considers a partition of  physical sites of a system into relatively small and localized fragments. For each fragment, DMFT defines an impurity problem in which the sites $i=1,\ldots, p$ correspond to the fragment and the sites $i > p$ constitute a `non-interacting bath,' meant to effectively mimic the effect of the environment on the fragment. The bath sites themselves are virtual in that they do not correspond to physical sites in the environment and are only meant to reproduce the environment's effect on the fragment. DMFT determines the effective impurity problems via a self-consistency condition by which global observables are forced to be compatible with local observables computed via the impurity problems. In an ongoing work, we use the sparsity result for the self-energy of the impurity problem to prove that the algorithmic DMFT loop that achieves this self-consistency is mathematically well-defined.

The key assumption underlying the self-consistency condition in DMFT is a block-diagonal ansatz for the global self-energy, with blocks specified by the self-energy matrices for each fragment. Computationally, the sparsity result implies that the self-energy can be recovered in terms of observables measured \emph{only} on the fragment part of each impurity problem. This observation plays a central role in numerical
algorithms for solving the quantum impurity problem, such as quantum
Monte Carlo (QMC) method~\cite{GullMillisLichtensteinEtAl2011}, where in fact the size of the bath may be thought of as effectively infinite.

  At a glance there is
no connection between this impurity Hamiltonian and the type of Gibbs
measure discussed earlier. Nonetheless, we claim that there is an analogy under
which $h$ maps to $A$, the single-particle Green's function
of the quantum many-body problem maps to $G$, and the self-energy matrix associated with
the Green's function maps to $\Sigma$. Then the counterpart of
Proposition~\ref{thm:sparsegibbs} can be stated in words as:
\textit{the self-energy matrix of a quantum impurity problem is a sparse
matrix, with nonzero entries only on the block associated with the impurity sites.}

The connection between
the classical impurity model and the quantum impurity problem
can be understood formally by writing quantum Green's functions 
in terms of the  coherent state path integral~\cite{NegeleOrland1988}, which formally 
resembles a Gibbs measure. We remark that in the case of 
fermions, the resemblance should be noted with special 
caution because the coherent state path integral involves 
Grassmann integrals. In this sense, the setting of
Proposition~\ref{thm:sparsegibbs} can indeed be understood as the 
`classical impurity problem.' 
Unlike the corresponding result for Gibbs measures, the quantum result has been well-known in the
quantum physics literature since Feynman and Vernon in
1963~\cite{FeynmanVernon1963} at the latest.
Again somewhat surprisingly, this important statement is to the
best of our knowledge a `folk theorem,' in that we cannot find
a rigorous proof of this result in the literature.  

In this paper, we fill this gap by providing rigorous proofs of the sparsity of the self-energy matrix of fermionic and bosonic quantum impurity problems at both zero and finite temperature (Theorems~\ref{thm:zeroTemp} and~\ref{thm:finiteTemp}, respectively). We will also cover the non-equilibrium setting (Theorem~\ref{thm:noneq}) via the consideration of arbitrary contour-ordered Green's functions, as well as the anomalous setting (Theorem~\ref{thm:anom}), which
is relevant to superconductivity. Excellent introductions to the 
non-equilibrium and anomalous formalisms can be found in \cite{StefanucciVanLeeuwen2013, BlaizotRipka1985}, 
respectively.

For concreteness, we outline here the setting of zero temperature and fixed particle number $N$, as well as our result in this setting.
Let $\big\vert \Psi_0^{(N)} \big\rangle$ denote a normalized $N$-particle ground state of $\hat{H}$, and 
let the corresponding eigenvalue be $E_0^{(N)}$.\footnote{The fact that $\left(E_0^{(N)},\big\vert \Psi_0^{(N)} \big\rangle\right)$ is the eigenpair corresponding to the ground state plays no role in the proof; it can be replaced by any other eigenpair of $\hat{H}$, and the statement remains valid. It is only natural due to physical reasons to consider  the ground state.} 
Then in this setting, the single-particle Green's function can be understood as a 
rational function $G: \mathbb{C} \ra \mathbb{C}^{d\times d}$ defined by 
$G(z) = G^{+}(z) + G^{-}(z)$, where 
$G^{\pm}$ are themselves rational functions\footnote{Usually $G^{\pm}$ carry the extra information that their poles are viewed as being located infinitesimally below/above the real axis. The choices that yield the `time-ordered' Green's function are described in Appendix \ref{sec:zeroGreens}. However, this extra information is irrelevant for the purpose of our results.} defined by 
\begin{equation*}
\begin{aligned}
G_{ij}^{+} (z) := & \ \big\langle \Psi_0^{(N)} \big\vert a_i \frac{1}{z - (\hat{H} - E_0^{(N)}) } a_j^{\dagger}  \big\vert  \Psi_0^{(N)} \big\rangle \\
G_{ij}^{-} (z) := & \ -\zeta \big\langle \Psi_0^{(N)} \big\vert a_j^{\dagger} \frac{1}{z + (\hat{H} - E_0^{(N)})} a_i  \big\vert  \Psi_0^{(N)} \big\rangle.
\end{aligned}
\end{equation*}
The self-energy is the rational function $\Sigma: \mathbb{C} \ra \mathbb{C}^{d\times d}$ defined by 
\[
\Sigma(z) := z - h - G(z)^{-1}.
\]
As we recover via Remark \ref{rem:nonInt} below, $z-h$ is in fact the inverse of the 
non-interacting Green's function, so this self-energy is defined analogously to the 
classical self-energy of Proposition \ref{thm:sparsegibbs}.
The reader should consult Appendix \ref{sec:zero} for further details and justification of these definitions. Then our main result on the sparsity pattern of the self-energy in this setting is stated as follows:

\begin{thm}
\label{thm:zeroTemp}
Suppose that 
$\hat{H}$ is an impurity Hamiltonian, with 
a fragment specified by the indices $1,\ldots,p$. Then the self-energy $\Sigma: \mathbb{C} \ra \mathbb{C}^{d\times d}$
is (up to the resolution of removable discontinuities) of the form 
  \begin{equation*}
    \Sigma(z) = \left(\begin{array}{cc}
 \Sigma_p (z) & 0 \\
 0 & 0
\end{array}\right).
  \end{equation*}
\end{thm}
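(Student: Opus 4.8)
The plan is to establish the block structure of $\Sigma(z)$ by a non-perturbative \emph{equation-of-motion} argument, showing directly that the last $d-p$ rows and the last $d-p$ columns of $zI - h - G(z)^{-1}$ vanish identically; the only inputs are the canonical (anti)commutation relations and the structure of $\hat U$. All identities below are to be read as identities between rational $\mathbb{C}^{d\times d}$-valued functions of $z$. Since $a_i a_j^\dagger - \zeta\, a_j^\dagger a_i = \delta_{ij}$ as an operator identity (for both statistics), one has $z\,G(z) \to I$ as $z \to \infty$, so $G(z)$ is invertible for all but finitely many $z$; hence $\Sigma(z) = zI - h - G(z)^{-1}$ is a genuine rational function, and it suffices to verify the claimed identities off a finite exceptional set (these being the removable discontinuities referred to in the statement).

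First I would fix $i > p$ and compute $z\,G^{\pm}_{ij}(z)$. Inserting $z = \big(z \mp (\hat H - E_0^{(N)})\big) \pm (\hat H - E_0^{(N)})$ and cancelling the $z \mp (\hat H - E_0^{(N)})$ piece against the resolvent denominator, then using that $|\Psi_0^{(N)}\rangle$ is an eigenvector of $\hat H$ (so that $\langle \Psi_0^{(N)}|(\hat H - E_0^{(N)}) = 0$ and $(\hat H - E_0^{(N)})|\Psi_0^{(N)}\rangle = 0$), the leftover factor $\hat H - E_0^{(N)}$ is absorbed, up to sign, into the commutator $[a_i, \hat H] = \sum_l h_{il}\,a_l + [a_i, \hat U]$ (the first term by the canonical relations). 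Here is the crucial point: $[a_i, \hat U] = 0$ whenever $i > p$. Indeed $\hat U$, being a particle-number-conserving polynomial in $\{a_k^\dagger, a_k : k \le p\}$, is a sum of monomials of even degree, so $a_i$ with $i > p$ commutes through $\hat U$ in the fermionic case (and trivially in the bosonic case). Consequently the $\hat U$-dependence drops out; the two resulting boundary terms — one from $G^+$, one from $G^-$ — sum to $\langle \Psi_0^{(N)}|(a_i a_j^\dagger - \zeta\,a_j^\dagger a_i)|\Psi_0^{(N)}\rangle = \delta_{ij}$; and one obtains, for every $i > p$ and all $j$,
\[
\sum_{l=1}^{d} \big(z\,\delta_{il} - h_{il}\big)\, G_{lj}(z) = \delta_{ij}.
\]
Comparing with the $(i,j)$ entry of $\big(zI - h - \Sigma(z)\big)\,G(z) = I$ gives $\big(\Sigma(z)\,G(z)\big)_{ij} = 0$ for all $i > p$ and all $j$; multiplying on the right by $G(z)^{-1}$ (valid off the exceptional set) shows that rows $p+1, \ldots, d$ of $\Sigma(z)$ vanish.

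An entirely parallel computation — now writing $z$ to the right of the resolvent, so that $\hat H - E_0^{(N)}$ acts on $a_j^\dagger|\Psi_0^{(N)}\rangle$ (resp. on $\langle \Psi_0^{(N)}|a_j^\dagger$) and produces $[\hat H, a_j^\dagger] = \sum_k h_{kj}\,a_k^\dagger + [\hat U, a_j^\dagger]$, with $[\hat U, a_j^\dagger] = 0$ for $j > p$ — gives $\sum_k G_{ik}(z)\,\big(z\,\delta_{kj} - h_{kj}\big) = \delta_{ij}$ for all $j > p$, hence $\big(G(z)\,\Sigma(z)\big)_{ij} = 0$ for $j > p$, so columns $p+1, \ldots, d$ of $\Sigma(z)$ vanish as well. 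Combining the two statements shows that $\Sigma(z)$ is supported in its upper-left $p\times p$ block, which is the assertion, with $\Sigma_p(z)$ that block (a rational $\mathbb{C}^{p\times p}$-valued function). The argument is wholly elementary; the only step demanding care — and hence the main obstacle, modest as it is — is the fermionic sign bookkeeping in the (anti)commutators, together with the observation that it is precisely the \emph{number-conserving} nature of $\hat U$, not merely its being built from the first $p$ modes, that forces $[a_i, \hat U] = 0$ and $[\hat U, a_i^\dagger] = 0$ for $i > p$.
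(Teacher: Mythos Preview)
Your proof is correct and follows essentially the same equation-of-motion argument as the paper: insert $z = (z \mp (\hat H - E_0^{(N)})) \pm (\hat H - E_0^{(N)})$ next to the resolvent, use the eigenvector property to kill one term, convert the surviving $\hat H - E_0^{(N)}$ into a commutator with a creation or annihilation operator, invoke $[a_i,\hat U]=0$ (resp.\ $[\hat U,a_j^\dagger]=0$) for bath indices, and combine the $G^{\pm}$ boundary terms via the canonical relation to obtain $\delta_{ij}$. The only cosmetic difference is that the paper first fixes $j>p$ and shows the $j$-th column of $G\Sigma$ vanishes (working with $[\hat H,a_j^\dagger]$), whereas you first fix $i>p$ and show the $i$-th row of $\Sigma G$ vanishes (working with $[a_i,\hat H]$); each then cites the symmetric computation for the other half.
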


\begin{rem}
\label{rem:nonInt}
Observe that if the fragment is of size zero, i.e., $p=0$, then we are in the non-interacting setting, and
 Theorem \ref{thm:zeroTemp} implies that $\Sigma(z) \equiv 0$, i.e., 
that $G(z) = (z-h)^{-1}$. Thus we recover a proof of the formula for the non-interacting Green's function.
\end{rem}
Analogous results in other settings will be stated in the main text.

We hope that this work will have pedagogical value, especially to the mathematical audience 
unfamiliar with the physics literature. Since it is difficult to find standard 
references in the mathematics literature that are appropriate to our setting, 
we have included appendices to put our results on firm footing.
Via the appendices, we have also sought to make the work self-contained within reason, 
providing in particular some brief introduction 
to the theory of Green's functions, both fermionic and bosonic, in the zero-temperature, 
finite-temperature, non-equilibrium, and anomalous settings.
In all of these settings, the impurity model with 
$p=0$ is precisely the non-interacting model, and our results on the sparsity pattern of the 
self-energy, applied in this special case, yield formulas for the non-interacting Green's functions. In the 
non-equilibrium setting especially, such a formula seems to be non-trivial to establish by other means. 
Readers new to the subject may find this presentation of the non-interacting Green's functions, as well 
as its embedding into a unified perspective, to be appealing in its own right.

\subsection*{Other perspectives:}

We discuss several other ways of understanding the sparsity pattern 
of the self-energy for impurity problems. 
First, we remark by considering the coherent-state path integral
representation~\cite{NegeleOrland1988} (in any of the quantum settings discussed in this paper), 
one can formally view the quantum many-body ensemble as a Gibbs measure. The proof 
of Proposition \ref{thm:sparsegibbs} can be mimicked in these settings at the formal 
level to derive the appropriate sparsity results, but we omit such
formal manipulations here.

Secondly, the sparsity pattern can be most intuitively understood via the Feynman diagrammatic 
expansion, which provides another viewpoint on the formal unification of the 
classical and quantum settings.
Indeed, due to the connection
between the classical setting of Gibbs measures and the coherent state path integral, 
we limit our discussion the case of Gibbs models here
for simplicity.  We do not provide here a self-contained introduction to the diagrammatic 
expansion; instead we refer readers
to~\cite{AmitMartin-Mayor2005,NegeleOrland1988,LinLindsey_diagram1} for a more
detailed description.

As before, define the partition function 
\begin{equation}
  Z = \int_{\RR^{d}} e^{-\frac12 x^{T} A x - \varepsilon U(x)}\ud x,
  \label{eqn:partition}
\end{equation}
where $A$ is a positive definite matrix and where we have introduced the
parameter $\varepsilon>0$
as a prefactor for the interaction (referred to as the coupling constant). Then formally we may apply Taylor
expansion for $e^{-\varepsilon U(x)}$ to obtain a series expansion for
$Z$, as in
\begin{equation}
  Z = \int_{\RR^{d}} \sum_{n=0}^{\infty} \frac{\varepsilon^{n}}{n!} (-U(x))^{n}
  e^{-\frac12 x^{T} A x}\ud x \sim
  \sum_{n=0}^{\infty} \frac{\varepsilon^{n}}{n!} \int_{\RR^{d}} (-U(x))^{n}
  e^{-\frac12 x^{T} A x}\ud x,
  \label{eqn:Ztaylor}
\end{equation}
where the `$\sim$' is meant to indicate that the series is valid only in the asymptotic sense.

Assuming $U(x)$ is a polynomial of $x$, then each term on the right hand
side of Eq.~\eqref{eqn:Ztaylor} requires the evaluation of a possibly large, but 
finite, number of moments 
of a Gaussian distribution. The expansion can be organized in terms of Feynman diagrams.

Feynman diagrammatic expansions can also be obtained for
$G$ and $\Sigma$. In particular, the self-energy diagrams
are truncated, one-particle irreducible Feynman
diagrams~\cite{LinLindsey_diagram1}.  To be concrete, one can keep in mind the 
quartic interaction
\begin{equation}
  U(x) = \frac{1}{8} \sum_{i,j=1}^{N} v_{ij} x_{i}^2 x_{j}^2,
  \label{eqn:Uterm}
\end{equation}
which mimics the two-body Coulomb interaction of quantum many-body physics. 
Here $v$ is a symmetric positive definite matrix. 
In order to specify an impurity problem with fragment specified by indices $1,\ldots,p$ we take 
$v_{ij}=0$ if $i>p$ or $j>p$.  Then, it can readily be read from the diagrammatic expansion of 
$\Sigma$ as in \cite{LinLindsey_diagram1} that for each term in the expansion of $\Sigma_{ij}$, the
corresponding matrix element is nonzero only if $1\le i,j\le p$. This observation 
suggests that the self-energy matrix $\Sigma$, as the infinite sum of all of these 
terms, should follow the same sparsity pattern.  We remark that the above
diagrammatic argument can be applied to Gibbs models with rather general interaction form 
$U(x)$, as well as in the quantum many-body setting.
, where the diagrammatic series 
can be derived directly in the second-quantized representation or via the coherent state path integral.  

The major caveat to this argument is that the Feynman diagrammatic
expansion often has zero radius of 
convergence and maintains validity only in the asymptotic sense. This is the case
at least for the Gibbs models as well as bosonic systems. Hence the
sparsity for each term of the expansion does not necessarily imply that
the same is true of the self-energy itself when $\varepsilon$ is positive.  Even when the series
does converge (such as for fermionic systems with finitely many states), the
convergence radius may only be finite. Bootstrapping a positive radius of convergence 
via resummation or analytic continuation
arguments~\cite{NegeleOrland1988} is one possible route to proving the sparsity result 
in such a setting, though the details seem to be cumbersome and the proof is not as simple 
or general as others considered above.

Next, we discuss a route to the sparsity of the self-energy matrix via the 
so-called Luttinger-Ward formalism~\cite{LuttingerWard1960}, which expresses the self-energy 
as a functional derivative 
\begin{equation}
  \Sigma = \frac{\delta \Phi[G]}{\delta G}.
  \label{eqn:sigmaLW}
\end{equation}
Here $\Phi[G]$ is a functional of the Green's function, called the
Luttinger-Ward functional. Recently, for the Gibbs model, we have proved~\cite{LinLindsey2018,LinLindsey_diagram2} 
that $\Phi[G]$ is a well-defined for positive semidefinite $G$. In particular, 
we have established a \textit{projection rule}, which states that for
the classical impurity problem when $U(x)=U_{p}(x_{1},\ldots,x_{p})$, we
have
\begin{equation}
  \Phi[G] = \Phi_{p}[G_{p}].
  \label{eqn:projectionrule}
\end{equation}
Here $G_{p}$ is the upper-left $p \times p$ block of $G$, and
$\Phi_{p}$ is the Luttinger-Ward functional for the $p$-dimensional
model.  Combining Eq.~\eqref{eqn:sigmaLW}
and~\eqref{eqn:projectionrule}, one immediately obtains the sparsity pattern
for $\Sigma$.

However, establishing the existence of the Luttinger-Ward functional $\Phi[G]$ and 
its projection rule require a significant amount of work, and
the rigorous proof is so far only applicable to the Gibbs model. In fact, the
very existence of the Luttinger-Ward functional 
fermionic systems has been challenged over the past few
years~\cite{KozikFerreroGeorges2015,Elder2014,GunnarssonRohringerSchaeferEtAl2017}. 
Although the Luttinger-Ward perspective offers additional insight, 
the direct proofs provided in this paper are at this point more generally applicable, 
and certainly much simpler.

Finally, our results in the non-equilibrium setting should be compared to 
those in a recent work~\cite{CorneanMoldoveanuPillet2018}, in which \emph{advanced}/\emph{retarded} non-equilibrium
self-energies are rigorously constructed in the case of fermions. Though
not noted explicitly in the work, the appropriate sparsity results for these 
quantities can be seen to follow from the construction itself. 
By contrast, our non-equilibrium sparsity result concerns the \emph{contour-ordered} 
self-energy (for both fermions and bosons) and in particular recovers sparsity results for the advanced/retarded Green's functions. 
Moreover, our result holds for arbitrary contour. However, we do not actually 
construct the contour-ordered self-energy, but rather phrase our sparsity 
result in terms of operators that we suggestively name `$G\Sigma$' and `$\Sigma G$.'\footnote{The 
reader will find that from the point of view of this paper, 
these objects can be thought of as more natural than 
the non-equilibrium self-energy itself, and indeed all of our sparsity results are proved 
by considering their analogs.}
In so doing we sidestep a considerable analytical challenge such as that encountered in 
\cite{CorneanMoldoveanuPillet2018}. Thus our result can be viewed as trying to parsimoniously illustrate the broadest 
possible formal picture of sparsity results for the self-energy, rather than focusing on the analytical 
question of the construction of the self-energy itself. Incidentally, in our view 
a rigorous construction of the contour-ordered self-energy (for
arbitrary contour) seems to be an interesting and non-trivial matter.

\subsection*{Outline of the paper:}

This paper is organized as follows. We use the classical impurity
problem as a motivating example and prove
Proposition~\ref{thm:sparsegibbs} in section~\ref{sec:gibbs}. Section~\ref{sec:quantumImpurity} 
treats the quantum many-body case, including the settings of fermions and bosons in the 
equilibrium setting at zero and finite temperature, as well as the non-equilibrium setting specified 
by an arbitrary contour in the complex plane and the anomalous setting relevant to 
superconductivity.

Finally, in Appendix \ref{sec:secondq} we record self-contained background on second quantization. 
In Appendix \ref{sec:zero} we discuss the zero-temperature ensemble for fermions and bosons and 
the construction of the frequency representation of Green's functions in this setting. In Appendix \ref{sec:finite} 
we do the same for the finite-temperature ensemble. Some efforts must be made here to deal with 
analytical issues in the bosonic case, where the Fock space is infinite-dimensional, even for finitely 
many states. In Appendix \ref{sec:non-equilibrium} we discuss the technical conditions needed 
to define the appropriate objects in the bosonic non-equilibrium setting and provide some
background on main non-equilibrium setting of interest, specified by the Kadanoff-Baym contour.

\subsection*{Acknowledgments:} 

This work was partially supported by the Department of Energy under
Grant No. DE-SC0017867, No. DE-AC02-05CH11231, by the Air Force Office
of Scientific Research under award number FA9550-18-1-0095  (L. L.),
by the National Science Foundation Graduate Research Fellowship Program
under grant DGE-1106400 (M. L.), and by the National Science Foundation under Award No. 1903031 (M.L.).  We thank Garnet Chan, Jianfeng Lu,
Nicolai Reshetikhin, Reinhold Schneider, and Lexing Ying for helpful discussions.

\section{The classical impurity problem (Proof of Proposition \ref{thm:sparsegibbs})}\label{sec:gibbs}

We now embark upon the proof of Proposition \ref{thm:sparsegibbs} stated above.

Recall the definitions: 
\[
Z=\int_{\Rn}e^{-\frac{1}{2}x^{T}Ax-U(x)}\,dx,\ \ G=\frac{1}{Z}\int_{\Rn}xx^T e^{-\frac{1}{2}x^{T}Ax-U(x)}\,dx,
\]
 where the interaction $U$ only depends on the first $p\leq d$ variables.
Let $q=d-p$. It is not hard to see that $G$ is positive definite, hence 
invertible.

We will indicate the blocks of $A$ via 
\[
A=\left(\begin{array}{cc}
A_{11} & A_{12}\\
A_{21} & A_{22}
\end{array}\right), 
\]
where the upper-left block is $p\times p$. 
For various integrals considered below to be convergent, we
will require that $A_{22} \succ0$.
More generally, we adopt the notation that for any $d\times d$ matrix 
$M$, the notation $M_{21}$ indicates the 
lower-left block of $M$ (with respect to the above block structure), etc.

Then for the theorem, we want to show that the self-energy $\Sigma:=A-G^{-1}$ satisfies 
$\Sigma_{12} = 0$, $\Sigma_{21} = 0$, and $\Sigma_{22} = 0$. In other words, 
we want to show that $(G^{-1})_{12}= A_{12}$, $(G^{-1})_{21}= A_{21}$, and 
$(G^{-1})_{22}= A_{22}$.
Since $G$ and $A$ are
symmetric, it suffices to show that
$(G^{-1})_{12}= A_{12}$ and 
$(G^{-1})_{22}= A_{22}$, i.e., that 
\[
\left(\begin{array}{cc}
 (G^{-1})_{12}\\
(G^{-1})_{22}
\end{array}\right) 
= \left(\begin{array}{cc}
A_{12}\\
A_{22}
\end{array}\right).
\]
Left-multiplying both sides by $G$ (invertible), we see that this is in turn 
equivalent to showing that $(GA)_{12} = 0_{\,p\times q}$ and $(GA)_{22} = I_q$.

In the following our notation will make use of the splitting 
\[
x=\left(\begin{array}{c}
x_1\\
x_2
\end{array}\right),
\]
where $x\in \R^d$, $x_1\in\R^{p}$, and $x_2\in\R^{q}$. (For notational 
convenience, we \emph{do not} use the notation $\mathbf{x}_i$ as in the introduction. 
In this section, we will make no reference to the individual entries of $x$, so the notation 
is clear.)
Then we can write $U(x)=U_1(x_1)$. Abusing notation slightly, we write $U_1 = U$. 

Roughly speaking, the goal is to `integrate out' the lower variables (i.e., 
the last $q$ variables). To this end, we expand $G$ as  
\[
G =\frac{1}{Z}\int_{\R^{p}}e^{-U(x_1)}\int_{\R^{q}} xx^T \exp\left[-\frac{1}{2}\left(\begin{array}{c}
x_1\\
x_2
\end{array}\right)^{T}\left(\begin{array}{cc}
A_{11} & A_{12}\\
A_{21} & A_{22}
\end{array}\right)\left(\begin{array}{c}
x_1\\
x_2
\end{array}\right)\right]\,dx_2\,dx_1.
\]
 Observe that 
\begin{equation}
\label{eqn:schurDerivation}
\begin{split}
& \left(\begin{array}{c}
x_1\\
x_2
\end{array}\right)^{T}\left(\begin{array}{cc}
A_{11} & A_{12}\\
A_{21} & A_{22}
\end{array}\right)\left(\begin{array}{c}
x_1\\
x_2
\end{array}\right) 
\\ & \quad\quad =\  \left(x_2+A_{22}^{-1}A_{21} x_1 \right)^{T} A_{22} \left(x_2+A_{22}^{-1}A_{21} x_1 \right)
 +\ x_1^T (A_{11}-A_{12} A_{22}^{-1} A_{21}) x_1, 
 \\ & \quad\quad = \ \left(x_2+A_{22}^{-1}A_{21} x_1 \right)^{T} A_{22} \left(x_2+A_{22}^{-1}A_{21} x_1 \right)
 +\ x_1^T A_{11}^{\mathrm{S}} x_1,
\end{split}
\end{equation}
where $A_{22}^{-1}$ is understood always to indicate $(A_{22})^{-1}$ and 
where we have defined the Schur complement 
\[
A_{11}^{\mathrm{S}} := A_{11}-A_{12} A_{22}^{-1} A_{21}.
\]
Then it follows that 
\begin{equation}
\label{eqn:bigG}
\begin{split}
&G \ =\
\frac{1}{Z}\int_{\R^{p}}e^{-\frac{1}{2}x_1^{T}A_{11}^{\mathrm{S}}
x_1-U(x_1)}\times\\
   & \quad\quad\quad\quad \int_{\R^{q}} xx^T \exp\left[-\frac{1}{2}\left(x_2+A_{22}^{-1}A_{21} x_1 \right)^{T} A_{22} \left(x_2+A_{22}^{-1}A_{21} x_1 \right)\right]\,dx_2\,dx_1.
\end{split}
\end{equation}
Recall that we want to show that $(GA)_{12} = 0$ and $(GA)_{22} = I_q$. 
Right-multiplying the integral in \eqref{eqn:bigG} by $A$, this
motivates computing the upper-right and upper-left blocks of 
$xx^T A$, as in
\[
(xx^T A)_{12} = x_1\ 
(x_1^T \ x_2^T)
\left(\begin{array}{cc}
A_{12}\\
A_{22}
\end{array}\right),
\quad
(xx^T A)_{22} = x_2\ 
(x_1^T \ x_2^T)
\left(\begin{array}{cc}
A_{12}\\
A_{22}
\end{array}\right).
\]
Now 
\[
(x_1^T \ x_2^T)
\left(\begin{array}{cc}
A_{12}\\
A_{22}
\end{array}\right) 
= 
x_1^T A_{12} + x_2^T A_{22}
=
(x_1^T A_{12} A_{22}^{-1} + x_2^T) A_{22}
= y_2^T A_{22},
\]
where 
we have defined a new variable $y_2 = x_2+A_{22}^{-1}A_{21} x_1$, 
so 
\begin{equation}
\label{eqn:change}
(xx^T A)_{12} = x_1 y_2^T A_{22}, 
\quad
(xx^T A)_{22} = x_2 y_2^T A_{22}.
\end{equation}
The remarkable thing is that $x_2$ only appears in the exponent 
in the inner integrand of \eqref{eqn:bigG} via the expression 
$x_2 + A_{22}^{-1} A_{21} x_1 = y_2$. This motivates us 
to eliminate $x_2$ from the second equation of 
\eqref{eqn:change} to obtain 
\begin{equation}
\label{eqn:change2}
(xx^T A)_{12} = x_1 y_2^T A_{22}, 
\quad
(xx^T A)_{22} = y_2 y_2^T A_{22} - A_{22}^{-1} A_{21} x_1 y_2^T A_{22}.
\end{equation}
Then consider the change of variables from 
$x_1, x_2$ to $x_1, y_2$, yielded by the linear transformation
\[
\left(\begin{array}{cc}
x_1 \\
y_2
\end{array}\right) = \left(\begin{array}{cc}
I_p & 0 \\
A_{22}^{-1} A_{21} & I_q
\end{array}\right)
\left(\begin{array}{cc}
x_1 \\
x_2
\end{array}\right).
\]
Since the Jacobian determinant of this transformation is one, 
it follows from \eqref{eqn:bigG} and 
\eqref{eqn:change2} that 
\begin{equation*}
(GA)_{12} \ =\  \frac{1}{Z}\int_{\R^{p}}e^{-\frac{1}{2}x_1^{T}A_{11}^{\mathrm{S}} x_1-U(x_1)} \, x_1 
\left( \int_{\R^{q}} y_2^T e^ {-\frac{1}{2} y_2^T A_{22} y_2 }\,dy_2 \right)  A_{22} \,dx_1.
\end{equation*}
But evidently the inner integrand is zero, so $(GA)_{12} = 0$, as desired.
It also follows from \eqref{eqn:bigG} and 
\eqref{eqn:change2} that 
\begin{equation*}
\begin{split}
&(GA)_{22} \ =\  \frac{1}{Z}\int_{\R^{p}}e^{-\frac{1}{2}x_1^{T}A_{11}^{\mathrm{S}} x_1-U(x_1)}  
\left( \int_{\R^{q}} y_2 y_2^T e^ {-\frac{1}{2} y_2^T A_{22} y_2 }\,dy_2 \right)  A_{22} \,dx_1 \\
& \quad\quad\quad\quad\  - \ 
\frac{1}{Z}\int_{\R^{p}}e^{-\frac{1}{2}x_1^{T}A_{11}^{\mathrm{S}} x_1-U(x_1)}  A_{22}^{-1} A_{21} x_1
\left( \int_{\R^{q}} y_2^T e^ {-\frac{1}{2} y_2^T A_{22} y_2 }\,dy_2 \right)  A_{22} \,dx_1.
\end{split}
\end{equation*}
The inner integrand in the second term of the last expression is once again zero. 
Meanwhile, the inner integrand of the first term yields 
$Z_2 A_{22}^{-1}$, where 
\[
Z_2 := \int_{\R^{q}} e^ {-\frac{1}{2} y_2^T A_{22} y_2} \,dy_2.
\]
Then we have established 
\[
(GA)_{22} = \frac{I_p}{Z} \int_{\R^{p}} \int_{\R^{q}} e^{-\frac{1}{2}x_1^{T}A_{11}^{\mathrm{S}} x_1
 - \frac{1}{2} y_2^T A_{22} y_2 -U(x_1)}  \,dy_2\,dx_1.
\]
Changing variables back to $x_1, x_2$ and recalling from 
\eqref{eqn:schurDerivation} that 
$x_1^T A_{11}^{\mathrm{S}} x_1 + y_2^T A_{22} y_2 = x^T A x$, we see that 
\[
(GA)_{22} = \frac{I_p}{Z} \int_{\R^{d}}e^{-\frac{1}{2}x^T A x -U(x)}  \,dx = I_p,
\]
which completes the proof. $\square$

\section{The quantum impurity problem}
\label{sec:quantumImpurity}

Our setting in this section is the Fock space 
$\mathcal{F}_{\zeta,d}$ of fermions ($\zeta = -1$) or bosons ($\zeta = +1$) 
with a finite number $d$ of states. The annihilation and creation operators are 
denoted $a_1,\ldots,a_d$ and $a_1^\dagger, \ldots, a_d^\dagger$, respectively. 
We refer the reader to Appendix \ref{sec:secondq} for further details of the construction 
of $\mathcal{F}_{\zeta,d}$ as well as other details of second quantization.
For convenience we shall let $a = (a_1, \ldots, a_d)^{T}$ denote the vector 
of annihilation operators, and accordingly $a^\dagger = (a_1^\dagger, \ldots, a_d^\dagger)$.

For now\footnote{In sections \ref{sec:arbitraryContour} and \ref{sec:anomalous} below, the notion of the Hamiltonian will be somewhat modified.} we consider a particle-number-conserving\footnote{See Appendix \ref{sec:secondq} for a details.} 
self-adjoint Hamiltonian $\hat{H}$ on $\mathcal{F}_{\zeta,d}$, and we write $\hat{H}$ of the form
\[
\hat{H} = \hat{H}_0 + \hat{U},
\]
where 
\[
\hat{H}_0 := a^\dagger h a = \sum_{i,j=1}^d h_{ij} a_i^\dagger a_j
\]
is the \emph{single-particle} (or \emph{non-interacting}) part of the Hamiltonian, 
specified by a Hermitian $d\times d$ matrix $h$, and $\hat{U}$ 
is the \emph{interacting} part, which is itself a self-adjoint operator on $\mathcal{F}_{\zeta,d}$ 
that conserves particle number.

In the case that $\hat{U}$ can be written as a polynomial of the $a_i^\dagger, a_i$ 
for $i=1,\ldots, p$, we say that $\hat{H}$ is an \emph{impurity Hamiltonian}, with 
a \emph{fragment} specified by the indices $1,\ldots,p$. The rest of the indices 
correspond to the \emph{environment}. In this case, since $\hat{U}$ conserves particle number, 
it follows that $\hat{U}$ commutes with $a_j$ and $a_j^\dagger$ for $j > p$.

Before proceeding, we state and prove a simple but useful lemma that will be used 
repeatedly throughout the following discussion.
\begin{lem}
\label{lem:key}
$[a^{\dagger}ha,a_{j}^{\dagger}] = \sum_{k=1}^d h_{kj}a_{k}^{\dagger}$ and 
$[a_j, a^{\dagger}ha] = \sum_{l=1}^d h_{jl}a_{l}$.
\end{lem}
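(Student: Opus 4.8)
The plan is to reduce both identities to the canonical (anti)commutation relations of $\mathcal{F}_{\zeta,d}$ by a short direct computation, handling the fermionic ($\zeta=-1$) and bosonic ($\zeta=+1$) cases in parallel. First I would expand $a^\dagger h a = \sum_{i,k=1}^d h_{ik}\, a_i^\dagger a_k$ and use bilinearity of the commutator to write $[a^\dagger h a, a_j^\dagger] = \sum_{i,k} h_{ik}\,[a_i^\dagger a_k, a_j^\dagger]$, so that everything reduces to evaluating the single commutator $[a_i^\dagger a_k, a_j^\dagger]$ (and, symmetrically, $[a_j, a_i^\dagger a_k]$ for the second identity).

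The key algebraic step is to pick the appropriate product rule for the commutator according to the statistics. For bosons I would use the ordinary Leibniz rule $[AB,C] = A[B,C] + [A,C]B$, while for fermions I would instead use the formal identity $[AB,C] = A\{B,C\} - \{A,C\}B$. Applying this to $[a_i^\dagger a_k, a_j^\dagger]$ and invoking $\{a_k,a_j^\dagger\} = \delta_{kj}$, $\{a_i^\dagger,a_j^\dagger\}=0$ (resp. $[a_k,a_j^\dagger]=\delta_{kj}$, $[a_i^\dagger,a_j^\dagger]=0$) collapses the expression to $\delta_{kj}\, a_i^\dagger$ in both settings; summing against $h_{ik}$ then gives $\sum_i h_{ij}\, a_i^\dagger$, which is the first identity after relabeling the index. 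For the second identity I would likewise use $[C,AB] = \{C,A\}B - A\{C,B\}$ (fermions) or the Leibniz rule (bosons) to reduce $[a_j, a_i^\dagger a_k]$ to $\delta_{ij}\, a_k$, and then sum.

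Since the whole argument is nothing more than an application of the defining relations of $\mathcal{F}_{\zeta,d}$ recorded in Appendix \ref{sec:secondq}, I do not anticipate any genuine obstacle; the only point requiring mild care is the bookkeeping of which product rule to use in the fermionic case, so as not to generate spurious signs. Alternatively, one can dispense with the case split altogether by using the unified relations $a_i a_j^\dagger - \zeta\, a_j^\dagger a_i = \delta_{ij}$ and $a_i a_j - \zeta\, a_j a_i = 0$ and rearranging $a_i^\dagger a_k a_j^\dagger$ and $a_j^\dagger a_i^\dagger a_k$ directly; this is marginally more computational but renders the $\zeta$-independence of the final answer manifest, and it is arguably the cleanest route for a self-contained write-up.
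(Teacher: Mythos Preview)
Your proposal is correct. The paper's proof is precisely your ``alternative'' route: it works uniformly in $\zeta$ by writing $(a^\dagger h a)\,a_j^\dagger = \sum_{k,l} h_{kl}\, a_k^\dagger a_l a_j^\dagger$, moving $a_l$ past $a_j^\dagger$ via $a_l a_j^\dagger = \zeta\, a_j^\dagger a_l + \delta_{jl}$, and then $a_k^\dagger$ past $a_j^\dagger$ via $a_k^\dagger a_j^\dagger = \zeta\, a_j^\dagger a_k^\dagger$, so the two $\zeta$'s cancel and one reads off the commutator directly (and similarly for the second identity). Your primary plan instead applies the Leibniz rule $[AB,C]=A[B,C]+[A,C]B$ in the bosonic case and the graded version $[AB,C]=A\{B,C\}-\{A,C\}B$ in the fermionic case to collapse $[a_i^\dagger a_k, a_j^\dagger]$ to $\delta_{kj}\,a_i^\dagger$. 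Both arguments are equally short; the paper's version avoids the case split and makes the $\zeta$-independence visible from the outset, while your Leibniz-rule approach is perhaps more recognizable to readers used to CCR/CAR computations and isolates exactly which (anti)commutators are being invoked.
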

\begin{proof}
Simply compute
\begin{eqnarray*}
(a^\dagger h a) a_j^\dagger & = & \sum_{k,l=1}^d h_{kl} a_k^\dagger a_l a_j^\dagger \\
& = & \sum_{k,l=1}^d h_{kl} a_k^\dagger (\zeta a_j^\dagger a_l + \delta_{jl}) \\ 
& = & \sum_{k,l=1}^d h_{kl} a_j^\dagger a_k^\dagger  a_l + \sum_{k=1}^d h_{kj} a_k^\dagger \\ 
& = & a_j^\dagger (a^\dagger h a) + \sum_{k=1}^d h_{kj} a_k^\dagger, 
\end{eqnarray*}
which proves the first statement of the lemma. Similarly, 
\begin{eqnarray*}
(a^\dagger h a) a_j & = &  \sum_{k,l=1}^d h_{kl} \zeta a_k^\dagger a_j a_l \\
& = & \sum_{k,l=1}^d h_{kl} (a_j a_k^\dagger - \delta_{jk})  a_l \\ 
& = & a_j (a^\dagger h a) - \sum_{l=1}^d h_{jl} a_l
\end{eqnarray*}
which proves the lemma.
\end{proof}

\subsection{Zero temperature (Proof of Theorem~\ref{thm:zeroTemp})}\label{sec:sparseZeroTemp}
We consider the setting of zero temperature and fixed particle number $N$. The reader should recall the notation for this setting introduced in section~\ref{sec:intro}. (Note that further background and detail is provided in Appendix \ref{sec:zeroGreens}.) Then we turn to the proof of Theorem~\ref{thm:zeroTemp} stated in section~\ref{sec:intro}.

\begin{proof}[Proof of Theorem~\ref{thm:zeroTemp}]
We can write $\hat{H} = \hat{H}_0 + \hat{U}$, 
where
$\hat{H}_0 = a^\dagger h a$ 
and  $\hat{U}$ commutes with $a_j$ and $a_j^\dagger$ for $j > p$
is the \emph{interacting} part, which is itself a self-adjoint operator on $\mathcal{F}_{\zeta,d}$ 
that conserves particle number.

It suffices to prove that the $j$-th column of $G(z) \Sigma(z) $ is zero for $j > p$ and that 
the $i$-th row of $\Sigma(z) G(z)$ is zero for $i > p$. We will only prove the first claim; 
the second follows by symmetric reasoning.

Now $ G(z) \Sigma(z) = z G(z) - G(z) h  - I_d$, so we want to show that $z G_{ij}(z) = [G(z) h]_{ij} + \delta_{ij}$ 
for $j > p$.

Then we compute, using the fact that $(\hat{H} - E_0^{(N)}) \big\vert \Psi_0^{(N)}\big\rangle = 0$,  
\begin{eqnarray*}
z G_{ij}^{+} (z) & = & \big\langle \Psi_0^{(N)} \big\vert a_i \frac{1}{z - (\hat{H} - E_0^{(N)}) } a_j^{\dagger}    ( z - (\hat{H} - E_0^{(N)}) ) \big\vert \Psi_0^{(N)} \big\rangle \\
& = & \big\langle \Psi_0^{(N)} \big\vert a_i \frac{1}{z - (\hat{H} - E_0^{(N)}) }  ( z - (\hat{H} - E_0^{(N)}) ) a_j^{\dagger}   \big\vert \Psi_0^{(N)} \big\rangle \\ 
& & \ \ \ + \  \ \big\langle \Psi_0^{(N)} \big\vert a_i \frac{1}{z - (\hat{H} - E_0^{(N)}) } [ a_j^{\dagger}, z - (\hat{H} - E_0^{(N)}) ]  \big\vert \Psi_0^{(N)} \big\rangle \\ 
& = & \big\langle \Psi_0^{(N)} \big\vert a_i  a_j^{\dagger}   \big\vert \Psi_0^{(N)} \big\rangle 
 + \big\langle \Psi_0^{(N)} \big\vert a_i \frac{1}{z - (\hat{H} - E_0^{(N)}) } [ a_j^{\dagger},  z - (\hat{H} - E_0^{(N)})  ]  \big\vert \Psi_0^{(N)} \big\rangle.
\end{eqnarray*}
Now 
\[
[ a_j^{\dagger},  z - (\hat{H} - E_0^{(N)}) ] = [ \hat{H}, a_j^{\dagger} ] = [a^\dagger h a, a_j^\dagger ] + [\hat{U},a_j^\dagger] =  \sum_{k=1}^d h_{kj}a_{k}^{\dagger},
\]
where we have used Lemma \ref{lem:key} as well as the fact that $j>p$ (so $[\hat{U},a_j^\dagger] = 0$).

Then it follows that 
\[
z G_{ij}^{+} (z) = \big\langle \Psi_0^{(N)} \big\vert a_i a_j^{\dagger}
\big\vert \Psi_0^{(N)} \big\rangle + [G^{+}(z) h]_{ij}.
\]

Similarly, we compute 
\begin{eqnarray*}
z G_{ij}^{-} (z) & = & -\zeta \big\langle \Psi_0^{(N)} \big\vert  ( z + (\hat{H} - E_0^{(N)}) )a_j^\dagger \frac{1}{z + (\hat{H} - E_0^{(N)}) } a_i   \big\vert \Psi_0^{(N)} \big\rangle \\
& = & -\zeta \big\langle \Psi_0^{(N)} \big\vert a_j^\dagger ( z + (\hat{H} - E_0^{(N)}) )\frac{1}{z + (\hat{H} - E_0^{(N)}) }   a_i   \big\vert \Psi_0^{(N)} \big\rangle \\ 
& & \ \ \ + \  \ (-\zeta)\big\langle \Psi_0^{(N)} \big\vert [ z + (\hat{H} - E_0^{(N)}), a_j^\dagger ] \frac{1}{z + (\hat{H} - E_0^{(N)}) }  a_i \big\vert \Psi_0^{(N)} \big\rangle \\ 
& = & -\zeta \big\langle \Psi_0^{(N)} \big\vert a_j^{\dagger}  a_i    \big\vert \Psi_0^{(N)} \big\rangle 
  -\zeta \big\langle \Psi_0^{(N)} \big\vert [ z + (\hat{H} - E_0^{(N)}), a_j^\dagger ]  \frac{1}{z + (\hat{H} - E_0^{(N)}) } a_i \big\vert \Psi_0^{(N)} \big\rangle.
\end{eqnarray*}
Now 
\[
[ z + (\hat{H} - E_0^{(N)}), a_j^\dagger ]  = [ \hat{H}, a_j^\dagger] =  \sum_{k=1}^d h_{kj}a_{k}^{\dagger}.
\]
Then it follows that 
\[
z G_{ij}^{-} (z) = -\zeta \big\langle \Psi_0^{(N)} \big\vert a_i a_j^{\dagger}   \big\vert \Psi_0^{(N)} \big\rangle + [G^{-}(z) h]_{ij}.
\]
Therefore 
\[
z G_{ij} (z) = [G(z) h]_{ij} + \big\langle \Psi_0^{(N)} \big\vert a_i a_j^{\dagger} - \zeta a_j^\dagger a_i  \big\vert \Psi_0^{(N)} \big\rangle = 
 [G(z) h]_{ij} + \delta_{ij},
\]
as was to be shown.
\end{proof}

\subsection{Finite temperature}\label{sec:sparseFiniteTemp}
Now we consider the setting of finite inverse temperature $\beta\in (0,\infty)$ and chemical potential 
$\mu \in \intdom Z$, where
$Z(\mu) = \Tr[e^{-\beta (\hat H - \mu \hat{N})}]$ and $\intdom Z$ is the interior of the effective domain $\mathrm{dom}\,Z = \{\mu \,: \, Z(\mu) < \infty\}$ (see Appendix \ref{sec:finite} for further details). Note 
that $\intdom Z$ is guaranteed to be non-empty under Assumption \ref{assumption:bosons}.

 We also let 
$\vert \Psi_m \rangle$ denote the normalized eigenstates of $\hat{H}$, where $m$ ranges from 
$0$ to $2^{d} - 1$ in the case of fermions and from $0$ to $\infty$ in the case of bosons.
In this setting, the single-particle Green's function can be understood as a 
rational function $G: \mathbb{C} \ra \mathbb{C}^{d\times d}$ defined by 
$G(z) = G^{+}(z) + G^{-}(z)$, where 
$G^{\pm}$ are themselves rational functions\footnote{The same comments as in section \ref{sec:sparseZeroTemp} apply here as well, though instead see Appendix \ref{sec:finiteGreens} for details relevant to this setting.} defined by 
\begin{equation*}
\begin{aligned}
G_{ij}^{+} (z) := & \ \frac{1}{Z} \sum_m e^{-\beta (E_m - \mu N_m)} \big\langle \Psi_m \big\vert a_i \frac{1}{z - (\hat{H}- E_m)} a_j^{\dagger}  \big\vert \Psi_m \big\rangle \\
G_{ij}^{-} (z) := & \ \frac{-\zeta}{Z} \sum_m e^{-\beta (E_m - \mu N_m)} \big\langle \Psi_m \big\vert a_j^{\dagger} \frac{1}{z + (\hat{H}- E_m) } a_i  \big\vert \Psi_m \big\rangle,
\end{aligned}
\end{equation*}
and these sums are absolutely convergent away from the poles. Here 
\[
Z = \Tr[e^{-\beta (\hat H - \mu \hat{N})}] = \sum_m e^{-\beta (E_m - \mu N_m)}.
\]
Once again the self-energy is the rational function $\Sigma: \mathbb{C} \ra \mathbb{C}^{d\times d}$ defined by 
\[
\Sigma(z) := z - h - G(z)^{-1}.
\]
The reader should consult Appendix \ref{sec:finite} for further details and justification of these definitions.

\begin{thm}
\label{thm:finiteTemp}
Suppose that 
$\hat{H}$ is an impurity Hamiltonian, with 
a fragment specified by the indices $1,\ldots,p$. Then the self-energy $\Sigma: \mathbb{C} \ra \mathbb{C}^{d\times d}$
is (up to the resolution of removable discontinuities) of the form 
  \begin{equation*}
    \Sigma(z) = \left(\begin{array}{cc}
 \Sigma_p (z) & 0 \\
 0 & 0
\end{array}\right).
  \end{equation*}
\end{thm}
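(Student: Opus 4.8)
The plan is to mimic almost verbatim the proof of Theorem \ref{thm:zeroTemp}, since the finite-temperature Green's function is just a Gibbs-weighted average of zero-temperature-type expressions over the eigenbasis $\{\vert \Psi_m \rangle\}$ of $\hat H$. As in the zero-temperature case, it suffices to show that the $j$-th column of $G(z)\Sigma(z)$ vanishes for $j > p$, and that the $i$-th row of $\Sigma(z) G(z)$ vanishes for $i > p$; by symmetry only the first claim needs to be addressed. Writing $G(z)\Sigma(z) = zG(z) - G(z)h - I_d$, the goal reduces to the identity $z G_{ij}(z) = [G(z)h]_{ij} + \delta_{ij}$ for all $j > p$.

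First I would treat $G^{+}$. Fix an eigenstate $\vert \Psi_m \rangle$ with $\hat H \vert \Psi_m \rangle = E_m \vert \Psi_m\rangle$, so that $(\hat H - E_m)\vert \Psi_m\rangle = 0$. Exactly as in the zero-temperature proof, I would insert the factor $(z - (\hat H - E_m))(z - (\hat H - E_m))^{-1}$ to the right of $a_j^\dagger$, commute $(z - (\hat H - E_m))$ past $a_j^\dagger$ at the cost of the commutator $[a_j^\dagger, z - (\hat H - E_m)] = [\hat H, a_j^\dagger]$, and use $[\hat H, a_j^\dagger] = [a^\dagger h a, a_j^\dagger] + [\hat U, a_j^\dagger] = \sum_{k=1}^d h_{kj} a_k^\dagger$ — here Lemma \ref{lem:key} handles the quadratic part and $[\hat U, a_j^\dagger] = 0$ because $j > p$ and $\hat U$ is a polynomial in $a_i^\dagger, a_i$ with $i \le p$ (equivalently, $\hat U$ commutes with $a_j^\dagger$ for $j > p$). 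This yields, per term,
\[
z \,\big\langle \Psi_m \big\vert a_i \tfrac{1}{z-(\hat H - E_m)} a_j^\dagger \big\vert \Psi_m\big\rangle = \big\langle \Psi_m \big\vert a_i a_j^\dagger \big\vert \Psi_m \big\rangle + \sum_k \big\langle \Psi_m \big\vert a_i \tfrac{1}{z-(\hat H - E_m)} a_k^\dagger \big\vert \Psi_m \big\rangle h_{kj},
\]
and then summing against $\tfrac{1}{Z} e^{-\beta(E_m - \mu N_m)}$ over $m$ gives $z G_{ij}^{+}(z) = \tfrac{1}{Z}\sum_m e^{-\beta(E_m - \mu N_m)}\langle \Psi_m \vert a_i a_j^\dagger \vert \Psi_m\rangle + [G^{+}(z) h]_{ij}$. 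The treatment of $G^{-}$ is the mirror image: insert $(z + (\hat H - E_m))^{-1}(z + (\hat H - E_m))$ to the left of $a_j^\dagger$, use $[z + (\hat H - E_m), a_j^\dagger] = [\hat H, a_j^\dagger] = \sum_k h_{kj} a_k^\dagger$, and sum over $m$ to get $z G_{ij}^{-}(z) = \tfrac{-\zeta}{Z}\sum_m e^{-\beta(E_m - \mu N_m)}\langle \Psi_m \vert a_j^\dagger a_i \vert \Psi_m\rangle + [G^{-}(z) h]_{ij}$. Adding the two and using the canonical (anti)commutation relation $a_i a_j^\dagger - \zeta a_j^\dagger a_i = \delta_{ij}$ inside each expectation, together with $\tfrac{1}{Z}\sum_m e^{-\beta(E_m-\mu N_m)} = 1$, yields $z G_{ij}(z) = [G(z) h]_{ij} + \delta_{ij}$ for $j > p$, as required. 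The row statement for $\Sigma(z) G(z)$ follows by the symmetric computation (commuting $a_i$ rather than $a_j^\dagger$ and using the second identity of Lemma \ref{lem:key}).

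The only genuinely new ingredient beyond the zero-temperature argument is analytic bookkeeping: one must make sure the manipulations are legitimate term by term and that the infinite sum over $m$ (in the bosonic case) converges absolutely away from the poles so that it may be interchanged freely with the algebraic operations and with right-multiplication by $h$. This is exactly the content deferred to Appendix \ref{sec:finite} (where absolute convergence away from the poles is asserted, under Assumption \ref{assumption:bosons} in the bosonic case), so I expect this to be the main — though essentially routine — obstacle; the algebraic heart of the proof is identical to that of Theorem \ref{thm:zeroTemp}. I would therefore state the proof briefly, emphasizing that it proceeds \emph{mutatis mutandis} from the zero-temperature case, applying the identity to each eigenstate $\vert \Psi_m\rangle$ and then taking the Gibbs average, and pointing to Appendix \ref{sec:finite} for the convergence justification.
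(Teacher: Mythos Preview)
Your proposal is correct and follows essentially the same approach as the paper: define the per-eigenstate Green's function $G_m(z)$, apply the zero-temperature argument verbatim with $(E_0^{(N)},\vert\Psi_0^{(N)}\rangle)$ replaced by $(E_m,\vert\Psi_m\rangle)$ to get $zG_{m,ij}(z)=[G_m(z)h]_{ij}+\delta_{ij}$ for $j>p$, and then take the Gibbs-weighted sum. The paper's proof is in fact terser than yours (it does not re-spell out the $G^\pm$ computations or the convergence caveat), but the content is the same.
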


\begin{rem}
Once again (cf. Remark \ref{rem:nonInt}), we recover in the non-interacting setting the formula 
$G(z) = (z-h)^{-1}$.
\end{rem}

\begin{rem}
\label{rem:matsubara}
There is a further object known as the \emph{Matsubara Green's function} \cite{NegeleOrland1988}, which in turn yields the Matsubara 
self-energy. Although it is not usually defined this way, the Matsubara Green's function can be shown to be obtained from the finite-temperature Green's function, as defined above, by restriction to points $i\omega_m +\mu$, where $\omega_m$ are the fermionic/bosonic \emph{Matsubara frequencies} \cite{NegeleOrland1988}. The Matsubara self-energy can be 
obtained from the finite-temperature self-energy defined above via similar restriction. Therefore Theorem \ref{thm:finiteTemp} implies the same sparsity pattern for the Matsubara self-energy.
\end{rem}

\begin{proof}
The proof is essentially the same as that of Theorem \ref{thm:zeroTemp}. Once again we want to show that the $j$-th column of $ G(z) \Sigma(z)$ is zero for $j > p$ and that 
the $i$-th row of $ \Sigma(z) G(z)$ is zero for $i > p$. We will only prove the first claim; 
the second follows by symmetric reasoning.

Define $G_m (z)$ by 
\[
G_{m,ij} (z) :=  \big\langle \Psi_m \big\vert a_i \frac{1}{z - (\hat{H}- E_m)} a_j^{\dagger}  \big\vert \Psi_m \big\rangle 
- \zeta \big\langle \Psi_m \big\vert a_j^{\dagger} \frac{1}{z + (\hat{H}- E_m) } a_i  \big\vert \Psi_m \big\rangle.
\]
Then by the same reasoning as in the proof of Theorem \ref{thm:zeroTemp} (with the roles of $E_0^{(N)}$ and $\vert \Psi_0^{(N)} \rangle$ 
played by $E_m$ and $\vert \Psi_m \rangle$, we find that 
\[
z G_{m,ij}(z) = [G_m(z) h]_{ij} + \delta_{ij}.
\]
Now $G(z) = \frac{1}{Z} \sum_m e^{-\beta (E_m - \mu N_m)} G_{m,ij} (z)$, so the desired result follows.
\end{proof}

\subsection{Arbitrary contour}
\label{sec:arbitraryContour}
There is a more general perspective in which the time-ordering 
operation used in Appendices
\ref{sec:zero} and \ref{sec:finite} to derive the Green's functions considered above
is generalized to an ordering operation on an arbitrary 
contour in the complex plane. 
This perspective adds significant value in the 
non-equilibrium setting, in which one considers a time-dependent Hamiltonian. For such time-dependent 
problems, passage to the frequency representation is not possible. Instead we consider kernels on the 
contour.

Let $\mathcal{C}$ denote a piecewise smooth contour in the complex plane (not necessarily closed). 
Technically one should think of $\mathcal{C}$ not as a subset of $\mathbb{C}$, but as a parametrized path, 
$\gamma : I \ra \mathbb{C}$, where $I = (s_0, s_1)$ is some interval. Then for $s,s' \in I$ with $s < s'$, we define $\mathcal{C}(s,s')$ to be the `sub-contour' defined by restriction of $\gamma$ to the interval $(s,s')$. If $s > s'$, we define $\mathcal{C}(s,s')$ 
to be the contour obtained from $\mathcal{C}(s',s)$ by reversing its orientation. 

Additionally let $\hat{H}(z)$ denote an operator-valued
function on a neighborhood of $\mathcal{C} = \gamma(I)$. 
Here $\hat{H}(z) = a^\dagger h(z) a + \hat{U}(z)$ is particle-number-conserving, and we say that $\hat{H}(z)$ is an impurity Hamiltonian with a fragment specified by indices 
$1,\ldots,p$ if, for every $z \in \mathcal{C}$, $\hat{U}(z)$ can be written as a polynomial of the $a_i^\dagger, a_i$ for $i=1,\ldots,p$. As above, since $\hat{U}(z)$ must conserve particle number, it follows that $\hat{U}(z)$ commutes 
with $a_j$ and $a_j^\dagger$ for $j> p$.
It is convenient to denote $z(s) := \gamma(s)$, and abusing notation 
slightly we will write $\hat{H}(s) = \hat{H}(z(s))$.

For simplicity, we assume that $\hat{H}(s)$ is piecewise continuous, which will ensure that various integrals are well-defined below. Since the Fock space 
is finite dimensional in the case of fermions, the meaning of this statement is unambiguous. In 
the case of bosons, note that since $\hat{H}(s)$ is particle-number-conserving, we can 
sensibly consider its restriction to each of the $N$-particle subspaces (see Appendix \ref{sec:secondq}), 
each of which is finite-dimensional. Then by the continuity of $\hat{H}(s)$ we mean the continuity 
of all of these restrictions individually.

Now define a (not necessarily unitary) evolution operator from contour time $s' \in I$ to $s\in I$ as the time-ordered 
exponential
\[
U(s, s') = \mathcal{T} \left\{ e^{ -i \int_{\mathcal{C}(s,s')} \hat{H}(z) \,dz} \right\}.
\]
This simply means that $U(s,s')$ is taken as the solution of the differential equation 
\begin{equation}
\label{eqn:dsU}
\partial_s U(s,s') = -i \,\dot{z}(s) \hat{H}(s) U(s,s'),\quad U(s',s') = \mathrm{Id}.
\end{equation}
This initial-value problem indeed admits a unique solution in the bosonic case 
because the ODE can be viewed as describing the evolution of an operator on
 each of the (finite-dimensional) $N$-particle subspace separately.

From this definition it follows that 
\[
U(s,s'') U(s'', s') = U(s,s') 
\]
for all $s,s',s'' \in I$ and moreover that
\begin{equation}
\label{eqn:dsU2}
\partial_{s'} U(s,s') = i \,\dot{z}(s) U(s,s') \hat{H}(s').
\end{equation}
Abusing notation slightly by pretending that we can invert $s = s(z)$, 
we can more cleanly write 
\[
\partial _z U(z,z') = -i \hat{H}(z) U(z,z'), \quad \partial_{z'} U(z,z') = i U(z,z') \hat{H}(z'), 
\]
where $\partial_z = (\dot{z}(s))^{-1} \partial_s$. We will sometimes adopt this notational convention, 
and the meaning should be clear from context. 

The following assumption is adopted to ensure that the 
Green's function can be defined in the bosonic case:
\begin{assumption}
\label{assumption:traceClass}
We assume that for all $s > s'$, $U(s,s')$ is a bounded operator. Moreover, 
we assume that there exists $s > s'$ such that the operator norm of the 
restriction of $U(s,s')$ to the $N$-particle subspace decays exponentially in $N$.
\end{assumption}

Define the partition function 
\[
Z = \Tr [U(s_1, s_0)].
\]
Note that Assumption \ref{assumption:traceClass} guarantees that $U(s_1,s_0)$ is trace class, so 
$Z$ is indeed well-defined.
In order to define our ensemble, we must be able to divide by $Z$. Hence we assume:
\begin{assumption}
\label{assumption:Znonzero}
$Z \neq 0$.
\end{assumption} 
We show in Appendix \ref{sec:non-equilibrium} 
how Assumptions \ref{assumption:traceClass} and \ref{assumption:Znonzero} 
are naturally satisfied in the major non-equilibrium setting of interest, which features 
the Kadanoff-Baym contour. 

Then we define `pseudo-Heisenberg' representations of the annihilation and creation 
operators via
\[
a_i(s) = U(s_0,s) a_i U(s,s_0), \quad a_i^\dagger(s) = U(s_0,s) a_i^\dagger U(s,s_0).
\]
The \emph{contour-ordered, single-body Green's function} (which we call the Green's function 
for short when the context is clear) is a function $G: I \times I \ra \mathbb{C}^{d \times d}$ defined by 
\[
G_{ij}(s,s') =  \frac{-i}{Z} \Tr \big[ \mathcal{T} \big\{ a_i(s) a_j^\dagger (s') \big\} U(s_1, s_0) \big],
\]
where $\mc{T}$ is the \emph{contour-ordering operator}, formally defined by 
\begin{equation*}
   \mathcal{T}\big\{ a_i(s)  a_j^\dagger (s') \big\} =
   \begin{cases}
    a_{i}(s)a_{j}^{\dagger}(s'), & s' < s\\
     \zeta a_{j}^{\dagger}(s') a_{i}(s), & s' \geq s.
  \end{cases}
  \label{}
\end{equation*}
In other words we can write $G = G^{+} + G^{-}$, where 
\[
i G_{ij}^+(s,s') = \frac{1}{Z} \Tr \big[  U(s_1, s) a_i U(s,s') a_j^\dagger U(s',s_0) \big] \theta(s-s')
\]
and
\[
i G_{ij}^-(s,s') = \frac{\zeta}{Z} \Tr \big[  U(s_1, s') a_j^\dagger U(s',s)  a_iU(s,s_0)  \big] (1 - \theta(s-s')).
\]
Here 
 \begin{equation*}
 \theta(s) := \begin{cases}
    1, & s > 0\\ 
     0, & s \leq 0.
  \end{cases}
 \end{equation*}
In the bosonic case, Assumption \ref{assumption:traceClass} guarantees 
that the traces needed for this definition do indeed exist.
For later reference, note that we can define a product of suitable functions $A,B: I \times I \ra \mathbb{C}^{d \times d}$ 
(with an appropriate notion of multiplicative inverse, at least formally) via
\[
(AB)(s,s') = \int_{s_0}^{s_1} A(s,s'')B(s'',s')\, \dot{z}(s'') \,ds'',
\]
chosen so that formally we have 
\[
(AB)(z,z') = \int_{z_0}^{z_1} A(z,z'')B(z'',z') \,dz''.
\]
Notice that the appropriate identity $\delta(z,z')$ is then given by $\delta(z,z') = (\dot{z}(s))^{-1} \delta(s-s')$. This last expression should be interpreted carefully as an integral operator $A(s,s')$ on $L^2(I)$. Indeed, $z(s)$ is generally only piecewise smooth, so $(\dot{z}(s))^{-1}$ may be ill-defined at finitely many points, but nonetheless the expression remains well-defined as such an operator.

We remark that the zero-temperature and Matsubara Green's functions discussed in 
section~\ref{sec:sparseZeroTemp} and Remark~\ref{rem:matsubara}, respectively, can be recovered as contour-ordered Green's functions. 
By contrast, the real-time Green's function at finite temperature
considered in section \ref{sec:sparseFiniteTemp} \emph{cannot} be recovered 
directly as a contour-ordered Green's function, though it can be obtained indirectly via analytic continuation of the Matsubara 
Green's function. For this reason, diagrammatic expansion techniques at finite temperature 
are limited to the Matsubara Green's function and must be carried over to the 
real-time Green's function via analytic continuation. For further details, see \cite{StefanucciVanLeeuwen2013}.

One now wants to define the self-energy as 
\[
\Sigma(z,z') = i \partial_z - h(z)\, \delta(z,z') - G^{-1} (z,z').
\]
However, this definition is not rigorous without further justification. 
Indeed, note that $G$ can be viewed as an integral operator on $L^2 (I)$, and under reasonable assumptions 
$G$ is Hilbert-Schmidt, hence in particular compact.
Therefore its inverse is guaranteed to be an 
unbounded operator, if it can be constructed. Formally, one expects that the $i (\dot{z}(s))^{-1} \partial_s$ 
in our definition of the self-energy
will cancel an analogous term in the formal inverse $G^{-1}$ and that the self-energy 
can be written as a sum of a static and dynamic part as
\[
\Sigma(s,s') = \Sigma_{\mathrm{stat}}\, \delta(s - s') + \Sigma_{\mathrm{dyn}} (s,s'), 
\]
where $\Sigma_{\mathrm{dyn}}$ is a properly defined integral operator.

In our view the mathematical construction of the self-energy seems to be a non-trivial 
matter, and we will sidestep it in this work. (By contrast, the construction in the equilibrium 
setting is more straightforward in the frequency domain; see Appendices \ref{sec:zero} and \ref{sec:finite}.)

How then to discuss the sparsity pattern of the self-energy? Observe that formally, we should 
have
\begin{equation*}
\begin{split}
(\Sigma G)(z,z') &= i\partial_z G (z,z') - h(z) G(z,z') -  I_d \,\delta(z, z')   \\
(G \Sigma) (z,z') & = - i\partial_{z'} G (z,z') - G(z,z') h(z') -  I_d\, \delta(z, z'),
\end{split}
\end{equation*}
or, more rigorously, 
\begin{equation}
\label{eqn:SigmaG}
\begin{split}
(\Sigma G)(s,s') &= i(\dot{z}(s))^{-1} \partial_s G (s,s') - h(s) G(s,s') -  I_d \,(\dot{z}(s))^{-1} \,\delta(s - s')   \\
(G \Sigma) (s,s') & = - i(\dot{z}(s'))^{-1} \partial_{s'} G (s,s') - G(s,s') h(s') -  I_d\, (\dot{z}(s))^{-1}\, \delta(s - s').
\end{split}
\end{equation}
Again observe that equality is not meant to be interpreted pointwise, but rather in the sense of integral operators, as $(\dot{z}(s))^{-1}$ may be ill-defined at finitely many points.

Now instead of constructing the self-energy, we can \emph{define} operators $\Sigma G$ and $G \Sigma$ via \eqref{eqn:SigmaG} (in the sense of distributions), 
with the `$\Sigma$' appearing here merely as a notation. 
Now the desired sparsity pattern of $\Sigma$ is formally equivalent to the statement that $[\Sigma G]_{ij} = 0$ 
(as a distribution on $I$) for $i > p$ and $[G \Sigma]_{ij} = 0$ for $j > p$.

\begin{thm}
\label{thm:noneq}
With notation and assumptions as in the preceding, if $\hat{H}(z)$ is an impurity Hamiltonian with a
fragment specified by the indices $1,\ldots,p$, then
 $[\Sigma G]_{ij} = 0$ for $i>p$ and 
$[G \Sigma]_{ij} = 0$ for $j>p$.
\end{thm}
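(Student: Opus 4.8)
The plan is to mimic the computation from the proof of Theorem~\ref{thm:zeroTemp}, but now working with the contour-time evolution operators $U(s,s')$ in place of resolvents $(z - (\hat H - E))^{-1}$. First I would unpack the definitions \eqref{eqn:SigmaG}: since $G = G^+ + G^-$, it suffices to compute $i(\dot z(s))^{-1}\partial_s G^{\pm}(s,s')$ and $-i(\dot z(s'))^{-1}\partial_{s'} G^{\pm}(s,s')$ and check that after subtracting $h(s)G(s,s')$ (resp. $G(s,s')h(s')$) and the delta term, the $(i,j)$ entry vanishes for $i>p$ (resp. $j>p$). I will only do the $\Sigma G$ case; $G\Sigma$ follows by the symmetric argument with the roles of the two time slots swapped.

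The key step is the differentiation. Take $i>p$. Using the expression $iZ\, G^+_{ij}(s,s') = \Tr[U(s_1,s)a_i U(s,s')a_j^\dagger U(s',s_0)]\,\theta(s-s')$ and the ODEs \eqref{eqn:dsU}, \eqref{eqn:dsU2}, differentiating in $s$ produces three pieces: (i) the derivative hitting $\theta(s-s')$, which yields a $\delta(s-s')$ times $\Tr[U(s_1,s)a_i a_j^\dagger U(s,s_0)]$; (ii) the derivative hitting $U(s_1,s)$, giving $+i\dot z(s)\Tr[U(s_1,s)\hat H(s) a_i U(s,s')a_j^\dagger U(s',s_0)]$; (iii) the derivative hitting $U(s,s')$ from the left, giving $-i\dot z(s)\Tr[U(s_1,s)a_i \hat H(s) U(s,s')a_j^\dagger U(s',s_0)]$. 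Pieces (ii) and (iii) combine, after using cyclicity of the trace in (ii) and moving $\hat H(s)$ next to $a_i$, into a commutator term proportional to $[a_i,\hat H(s)]$. Now here is where the impurity structure enters: since $i>p$, Lemma~\ref{lem:key} gives $[a_i, a^\dagger h(s) a] = \sum_l h_{il}(s) a_l$, and $[a_i,\hat U(s)] = 0$ because $\hat U(s)$ is a particle-number-conserving polynomial in $a_1,\dots,a_p,a_1^\dagger,\dots,a_p^\dagger$ and hence commutes with $a_i$ for $i>p$. Therefore $[a_i,\hat H(s)] = \sum_l h_{il}(s)a_l$, and the commutator term reassembles precisely into $\sum_l h_{il}(s) G^+_{lj}(s,s')\,\dot z(s) \cdot(\text{const})$, i.e. into $[h(s)G^+(s,s')]_{ij}$ after dividing by $\dot z(s)$. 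Repeating the same manipulation for $G^-$ (now the $\theta$ piece contributes $\delta(s-s')$ times $\zeta\,\Tr[U(s_1,s)a_j^\dagger a_i U(s,s_0)]$, with opposite sign from the $(1-\theta)$), and adding, the boundary terms combine as $\langle\cdots|a_i a_j^\dagger - \zeta a_j^\dagger a_i|\cdots\rangle$-type traces, i.e. $\Tr[U(s_1,s_0)]\,\delta_{ij}/Z$ via the canonical commutation relation $[a_i,a_j^\dagger]_{-\zeta} = \delta_{ij}$, which exactly matches the $I_d\,(\dot z(s))^{-1}\delta(s-s')$ term; since $i>p$ and $j$ is arbitrary we in fact only need $i\neq j$ here unless $i=j>p$, in which case $\delta_{ij}=0$ anyway — wait, more carefully, for $i>p$ the relevant claim is that $[\Sigma G]_{ij}=0$, and the $\delta_{ij}$ produced cancels the explicit $I_d$ term, so what remains is $i(\dot z(s))^{-1}\partial_s G_{ij} - [h(s)G]_{ij} - \delta_{ij}(\dot z(s))^{-1}\delta(s-s') = 0$, as desired.

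I expect the main obstacle to be bookkeeping rather than conceptual: one must handle the distributional derivative of $\theta$ cleanly (the $\delta(s-s')$ terms must be shown to assemble correctly into the $(\dot z(s))^{-1}\delta(s-s')$ on the right-hand side of \eqref{eqn:SigmaG}, including getting the $\dot z$ factors and the orientation conventions of $\mathcal C(s,s')$ right), and one must be careful that all trace manipulations (cyclicity, insertion of $\hat H(s)$) are justified — in the bosonic case this uses Assumption~\ref{assumption:traceClass} to ensure everything in sight is trace class and the relevant operators are bounded on each $N$-particle sector, so that $\hat H(s)U(s,s')$ etc. make sense and cyclicity is legitimate. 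A secondary subtlety is that the identity is only claimed in the sense of integral operators / distributions on $I$, so I would phrase the final comparison by pairing both sides against an arbitrary test function and matching, rather than pointwise; since $(\dot z(s))^{-1}$ may blow up at the finitely many corners of the contour, this is the honest formulation. Once these points are dispatched, the algebra is essentially identical to the zero-temperature proof, with $\partial_s U = -i\dot z\hat H U$ playing the role of the resolvent identity.
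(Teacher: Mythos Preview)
Your proposal is correct and follows essentially the same route as the paper's proof: differentiate $G^{\pm}$ in $s$ via \eqref{eqn:dsU}--\eqref{eqn:dsU2}, combine the two $U$-derivative terms into the commutator $[a_i,\hat H(s)]$, use Lemma~\ref{lem:key} together with $[a_i,\hat U(s)]=0$ for $i>p$, and let the $\theta$-derivative pieces assemble into $\delta_{ij}(\dot z(s))^{-1}\delta(s-s')$ via the (anti)commutation relation. One small remark: no cyclicity of the trace is actually needed in piece (ii), since \eqref{eqn:dsU2} already places $\hat H(s)$ immediately to the left of $a_i$.
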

\begin{rem}
In the non-interacting setting $p=0$, we recover the formulas 
\[
i \partial_z G (z,z') - h(z)\,G(z,z') =  I_d \,\delta(z, z'), \quad  - i \partial_{z'} G (z,z') - G(z,z')\,h(z') =  I_d \,\delta(z, z'),
\]
where we have abused notation slightly in the manner described above. 
These formulas seem to be non-trivial to establish by any other means. By contrast with the equilibrium case, 
this formula cannot be established simply via a canonical transformation because it may not be 
possible to simultaneously diagonalize the
$h(z)$ for all $z$. In fact, in \cite{StefanucciVanLeeuwen2013}, the non-interacting Green's function is \emph{defined} via this 
formula (subject to certain boundary conditions) 
and shown to give the appropriate perturbation theory within the Martin-Schwinger hierarchy.
\end{rem}

\begin{proof}
We prove only the first statement, as the second follows from similar arguments. Recall
\[
i G_{ij}^{+}(s,s')= \frac{1}{Z}\Tr\left[U(s_{1},s)a_{i}U(s,s')a_{j}^{\dagger}U(s',s_{0})\right]\theta(s-s').
\]
Then compute, using Eqs. \eqref{eqn:dsU} and
\eqref{eqn:dsU2}, 
\begin{eqnarray*}
i  (\dot{z}(s))^{-1} \partial_{s}G_{ij}^{+}(s,s') & = & \frac{i}{Z}\Tr\left[U(s_{1},s)\hat{H}(s)a_{i}U(s,s')a_{j}^{\dagger}U(s',s_{0})\right]\theta(s-s')\\
 &  & \ \ -\ \frac{i}{Z}\Tr\left[U(s_{1},s)a_{i}\hat{H}(s)U(s,s')a_{j}^{\dagger}U(s',s_{0})\right]\theta(s-s')\\
 &  & \ \ +\ (\dot{z}(s))^{-1}\frac{1}{Z}\Tr\left[U(s_{1},s)a_{i}a_{j}^{\dagger}U(s,s_{0})\right]\delta(s-s')\\
 & = & \frac{-i}{Z}\Tr\left[U(s_{1},s)[a_{i},\hat{H}(s)]U(s,s')a_{j}^{\dagger}U(s',s_{0})\right]\theta(s-s')\\
 &  & \ \ +\ (\dot{z}(s))^{-1}\frac{1}{Z}\Tr\left[U(s_{1},s)a_{i}a_{j}^{\dagger}U(s,s_{0})\right]\delta(s-s').
\end{eqnarray*}
Now for $i>p$, $[a_i, \hat{U}(s)] = 0$, so $[a_i, \hat{H}(s)] = [a_i, a^\dagger h(s) a] = \sum_{l=1}^d h_{il}(s) a_l$, by 
Lemma \ref{lem:key}.
Therefore
\begin{eqnarray*}
i  (\dot{z}(s))^{-1} \partial_{s}G_{ij}^{+}(s,s') 
 & = & \frac{-i}{Z}\sum_{l=1}^{d}h_{il}(s) \Tr\left[U(s_{1},s)a_{l}U(s,s')a_{j}^{\dagger}U(s',s_{0})\right]\theta(s-s')\\
 &  & \ \ +\ (\dot{z}(s))^{-1}\frac{1}{Z}\Tr\left[U(s_{1},s)a_{i}a_{j}^{\dagger}U(s,s_{0})\right]\delta(s-s')\\
 & = & [h(s) G^{+}(s,s')]_{ij}+(\dot{z}(s))^{-1}\frac{1}{Z}\Tr\left[U(s_{1},s)a_{i}a_{j}^{\dagger}U(s,s_{0})\right]\delta(s-s').
\end{eqnarray*}
Similarly, 
\begin{eqnarray*}
i (\dot{z}(s))^{-1} \partial_{s}G_{ij}^{-}(s,s') & = & \frac{i\zeta}{Z}\Tr\left[U(s_{1},s')a_{j}^{\dagger}U(s',s)\hat{H}(s)a_{i}U(s,s_{0})\right](1-\theta(s-s'))\\
 &  & \ \ -\ \frac{i\zeta}{Z}\left[U(s_{1},s')a_{j}^{\dagger}U(s',s)a_{i}\hat{H}(s)U(s,s_{0})\right](1-\theta(s-s'))\\
 &  & \ \ - \ (\dot{z}(s))^{-1} \frac{\zeta}{Z}\Tr\left[U(s_{1},s)a_{j}^{\dagger}a_{i}U(s,s_{0})\right]\delta(s-s')\\
 & = & \frac{-i\zeta}{Z}\Tr\left[U(s_{1},s)[a_{i},\hat{H}(s)]U(s,s')a_{j}^{\dagger}U(s',s_{0})\right]\theta(s-s')\\
 &  & \ \ -\ \frac{\zeta}{Z}\Tr\left[U(s_{1},s)a_{j}^{\dagger}a_{i}U(s,s_{0})\right]\delta(s-s')\\
 & = & \frac{-i\zeta}{Z}\sum_{l=1}^{d}h_{il}(s)\Tr\left[U(s_{1},s) a_{l} U(s,s')a_{j}^{\dagger}U(s',s_{0})\right]\theta(s-s')\\
 &  & \ \ -\ (\dot{z}(s))^{-1} \frac{\zeta}{Z}\Tr\left[U(s_{1},s)a_{j}^{\dagger}a_{i}U(s,s_{0})\right]\delta(s-s')\\
 & = & [h(s) G^{-}(s,s')]_{ij}- (\dot{z}(s))^{-1} \frac{\zeta}{Z}\Tr\left[U(s_{1},s)a_{j}^{\dagger}a_{i}U(s,s_{0})\right]\delta(s-s').
\end{eqnarray*}
Therefore, since $G = G^+ + G^-$, we have 
\begin{eqnarray*}
i (\dot{z}(s))^{-1} \partial_{s}G_{ij}(s,s') & = & [h(s) G(s,s')]_{ij} \\ & & \quad + \ \frac{1}{Z}\Tr\left[U(s_{1},s)(a_{i}a_{j}^{\dagger}-\zeta a_{j}^{\dagger}a_{i})U(s,s_{0})\right] (\dot{z}(s))^{-1} \delta(s-s')\\
 & = & [h(s) G(s,s')]_{ij}+\delta_{ij}\,(\dot{z}(s))^{-1} \delta(s-s'),
\end{eqnarray*}
which completes the proof.
\end{proof}

\subsection{Anomalous setting}
\label{sec:anomalous}
Finally we will consider a sparsity result for the self-energy of \emph{anomalous} impurity problems. These are impurity problems 
in which the Hamiltonian does not conserve particle number. Since the anomalous setting is of most interest for the 
study of superconductivity in \emph{fermions}, we will restrict our attention to the fermionic setting. This allows us to avoid 
some further analytic difficulty since our rigorous definitions in the
bosonic case (in which the Fock space is infinite-dimensional) relied on
particle number conservation. It also eases the notational burden to
keep track of $\zeta$ to distinguish the bosonic and
fermionic systems. In order to
simply illustrate the points that are novel to this setting, we 
further restrict our attention to the zero-temperature equilibrium setting.

Now consider a self-adjoint Hamiltonian $\hat{H}$ on the fermionic Fock
space $\mathcal{F}_{-1,d}$, and we write $\hat{H}$ of the form
\[
\hat{H} = \hat{H}_0 + \hat{U},
\]
where 
\[
\hat{H}_0 := \hat{H}_{\mathrm{NA}} + \hat{H}_{\mathrm{A}} + \hat{H}_{\mathrm{A}}^\dagger
\]
is the single-particle part of the Hamiltonian (no longer particle-number-conserving), specified by its 
non-anomalous and anomalous parts 
\[
\hat{H}_{\mathrm{NA}} := \sum_{i,j=1}^d h_{ij} a_i^\dagger a_j ,\quad \hat{H}_{\mathrm{A}} := \frac{1}{2} \sum_{i,j=1}^d \Delta_{ij} a_i^\dagger a_j^\dagger.
\]
Therefore, up to a scalar multiple of the identity operator, $\hat{H}_0$ is given by 
\[
\left(\begin{array}{cc}
a \\
a^\dagger
\end{array}\right)^\dagger
 \left(\begin{array}{cc}
{h} & {\Delta}\\
-\overline{{\Delta}} & -\overline{{h}}
\end{array}\right)
\left(\begin{array}{cc}
a \\
a^\dagger
\end{array}\right),
\]
where we have abused notation slightly by using $a$ to indicate both a row and a column vector of operators.
Without loss of generality we assume that $\Delta = (\Delta_{ij})$ is a complex antisymmetric matrix. (Note that 
then $-\overline{\Delta} = \Delta^\dagger$, and since $h$ is Hermitian, $-\overline{h} = -h^{T}$.)
Meanwhile, the interacting part $\hat{U}$ is itself a self-adjoint operator on $\mathcal{F}_{-1,d}$, and we 
demand that it can be written as an \emph{even} polynomial of the
creation and annihilation operators, which includes the
particle-number-conserving $\hat{U}$ as a sub-case.
In the case that $\hat{U}$ can be written as a polynomial of the $a_i^\dagger, a_i$ 
for $i=1,\ldots, p$, we say that $\hat{H}$ is an \emph{anomalous impurity Hamiltonian}, with 
a fragment specified by the indices $1,\ldots,p$. As in earlier settings, the rest of the indices 
correspond to the environment. Note that the evenness of the polynomial specifying $\hat{U}$ guarantees that $\hat{U}$ commutes with $a_j$ and $a_j^\dagger$ for $j > p$.

Now define the following Green's functions: 
\[
G_{ij}^{\mathrm{hp}}(z):=G_{ij}^{\mathrm{hp},+}(z)+G_{ij}^{\mathrm{hp},-}(z):=\langle\Phi_{0}\vert a_{i}\frac{1}{z-(\hat{H}-E_{0})}a_{j}^{\dagger}\vert\Phi_{0}\rangle+\langle\Phi_{0}\vert a_{j}^{\dagger}\frac{1}{z+(\hat{H}-E_{0})}a_{i}\vert\Phi_{0}\rangle
\]
 
\[
G_{ij}^{\mathrm{pp}}(z):=G_{ij}^{\mathrm{pp},+}(z)+G_{ij}^{\mathrm{pp},-}(z):=\langle\Phi_{0}\vert a_{i}^{\dagger}\frac{1}{z-(\hat{H}-E_{0})}a_{j}^{\dagger}\vert\Phi_{0}\rangle+\langle\Phi_{0}\vert a_{j}^{\dagger}\frac{1}{z+(\hat{H}-E_{0})}a_{i}^{\dagger}\vert\Phi_{0}\rangle
\]
 
\[
G_{ij}^{\mathrm{hh}}(z):=G_{ij}^{\mathrm{hh},+}(z)+G_{ij}^{\mathrm{hh},-}(z):=\langle\Phi_{0}\vert a_{i}\frac{1}{z-(\hat{H}-E_{0})}a_{j}\vert\Phi_{0}\rangle+\langle\Phi_{0}\vert a_{j}\frac{1}{z+(\hat{H}-E_{0})}a_{i}\vert\Phi_{0}\rangle
\]

\[
G_{ij}^{\mathrm{ph}}(z):=G_{ij}^{\mathrm{ph},+}(z)+G_{ij}^{\mathrm{ph},-}(z):=\langle\Phi_{0}\vert a_{i}^{\dagger}\frac{1}{z-(\hat{H}-E_{0})}a_{j}\vert\Phi_{0}\rangle+\langle\Phi_{0}\vert a_{j}\frac{1}{z+(\hat{H}-E_{0})}a_{i}^{\dagger}\vert\Phi_{0}\rangle,
\]
 where $\vert\Phi_{0}\rangle$ is the ground state of $\hat{H}$
and $E_{0}$ is the ground-state energy. The superscripts $p$ and
$h$ stands for `particle' and `hole', respectively~\cite{BlaizotRipka1985}, so
$G^{\mathrm{hh}}$ is called the hole-hole Green's function,
$G^{\mathrm{ph}}$ is the particle-hole Green's function, etc.

Notice that the last two
Green's functions are actually redundant because $G^{\mathrm{ph}}(z)=-[G^{\mathrm{hp}}(-z)]^{T}$
and $G^{\mathrm{hh}}(z)=[G^{\mathrm{pp}}(\overline{z})]^{\dagger}$.
We can further define the anomalous Green's function by 
\[
\mathbf{G}(z):=\left(\begin{array}{cc}
G^{\mathrm{hp}}(z) & G^{\mathrm{hh}}(z)\\
G^{\mathrm{pp}}(z) & G^{\mathrm{ph}}(z)
\end{array}\right)
\]
and the anomalous self-energy by 
\[
\mathbf{\Sigma}(z) := z- \left(\begin{array}{cc}
{h} & {\Delta}\\
-\overline{{\Delta}} & -\overline{{h}}
\end{array}\right) - \mathbf{G}(z)^{-1}.
\]

In fact we will show the following result: 
\begin{thm}
\label{thm:anom}
Suppose that 
$\hat{H}$ is an anomalous impurity Hamiltonian, with 
a fragment specified by the indices $1,\ldots,p$. Then the anomalous self-energy $\mathbf{\Sigma}: \mathbb{C} \ra \mathbb{C}^{d\times d}$
is (up to the resolution of removable discontinuities) of the form 
  \begin{equation*}
    \mathbf{\Sigma}(z) =
    \left(\begin{array}{cccc}
 \Sigma_p^{\mathrm{hp}} (z) & 0 & \Sigma_p^{\mathrm{hh}} (z) & 0  \\
 0 & 0 & 0 & 0 \\
  \Sigma_p^{\mathrm{pp}} (z) & 0 & \Sigma_p^{\mathrm{ph}} (z) & 0\\ 
   0 & 0 & 0 & 0
\end{array}\right).
  \end{equation*}
\end{thm}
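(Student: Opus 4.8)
The plan is to collect the four Green's functions $G^{\mathrm{hp}},G^{\mathrm{hh}},G^{\mathrm{pp}},G^{\mathrm{ph}}$ into a single $2d\times 2d$ object and thereby reduce Theorem~\ref{thm:anom} to the argument already used for Theorem~\ref{thm:zeroTemp}, now run with $2d$ ``modes'' and a non-Hermitian effective single-particle matrix. Introduce the Nambu vector of operators $\alpha=(a_{1},\ldots,a_{d},a_{1}^{\dagger},\ldots,a_{d}^{\dagger})^{T}$, indexed by $I\in\{1,\ldots,2d\}$, so that $\alpha_{I}^{\dagger}=a_{i}^{\dagger}$ for $I=i\leq d$ and $\alpha_{I}^{\dagger}=a_{i}$ for $I=d+i$. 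One checks directly from the four definitions that the $2d\times 2d$ matrix $\mathbf{G}(z)$ (and hence also $\mathbf{\Sigma}(z)$, which is likewise $2d\times 2d$) satisfies
\begin{equation*}
\mathbf{G}_{IJ}(z)=\big\langle\Phi_{0}\big\vert\,\alpha_{I}\,\tfrac{1}{z-(\hat{H}-E_{0})}\,\alpha_{J}^{\dagger}\,\big\vert\Phi_{0}\big\rangle+\big\langle\Phi_{0}\big\vert\,\alpha_{J}^{\dagger}\,\tfrac{1}{z+(\hat{H}-E_{0})}\,\alpha_{I}\,\big\vert\Phi_{0}\big\rangle,
\end{equation*}
that is, $\mathbf{G}$ has exactly the form of the zero-temperature Green's function of section~\ref{sec:sparseZeroTemp} with $\zeta=-1$ and the pair $a_{i},a_{i}^{\dagger}$ replaced by $\alpha_{I},\alpha_{I}^{\dagger}$, while $\mathbf{\Sigma}(z)=z\,I_{2d}-\mathcal{H}_{0}-\mathbf{G}(z)^{-1}$ with $\mathcal{H}_{0}=\left(\begin{smallmatrix}h&\Delta\\-\overline{\Delta}&-\overline{h}\end{smallmatrix}\right)$.

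The first step is then the Nambu analogue of Lemma~\ref{lem:key}, namely
\begin{equation*}
[\hat{H}_{0},\alpha_{J}^{\dagger}]=\sum_{K=1}^{2d}(\mathcal{H}_{0})_{KJ}\,\alpha_{K}^{\dagger}\qquad\text{for all }J.
\end{equation*}
I would prove this by computing separately $[\hat{H}_{\mathrm{NA}},a_{j}^{\dagger}]$ and $[\hat{H}_{\mathrm{NA}},a_{j}]$ (via Lemma~\ref{lem:key}), the vanishing brackets $[\hat{H}_{\mathrm{A}},a_{j}^{\dagger}]=0$ and $[\hat{H}_{\mathrm{A}}^{\dagger},a_{j}]=0$, and the two nontrivial brackets $[\hat{H}_{\mathrm{A}},a_{j}]$ and $[\hat{H}_{\mathrm{A}}^{\dagger},a_{j}^{\dagger}]$, and then matching the resulting coefficients against the four $d\times d$ blocks of $\mathcal{H}_{0}$; it is precisely here that the antisymmetry of $\Delta$ and the Hermiticity of $h$ are invoked, so that the off-diagonal and lower-right blocks come out to be exactly $\Delta$, $-\overline{\Delta}$, and $-\overline{h}$.

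Next, because $\hat{U}$ is an \emph{even} polynomial in the $a_{i},a_{i}^{\dagger}$ with $i\leq p$, it commutes with $\alpha_{J}^{\dagger}$ whenever $J$ is an \emph{environment index}, i.e. $J\in\{p+1,\ldots,d\}\cup\{d+p+1,\ldots,2d\}$; hence $[\hat{H},\alpha_{J}^{\dagger}]=[\hat{H}_{0},\alpha_{J}^{\dagger}]$ for such $J$. Inserting $z-(\hat{H}-E_{0})$ on the right of $\alpha_{J}^{\dagger}$ in the ``$+$'' piece and $z+(\hat{H}-E_{0})$ on the left of $\alpha_{J}^{\dagger}$ in the ``$-$'' piece, commuting, and using $(\hat{H}-E_{0})\vert\Phi_{0}\rangle=0$ exactly as in the proof of Theorem~\ref{thm:zeroTemp}, one obtains, for every $I$ and every environment index $J$,
\begin{equation*}
z\,\mathbf{G}_{IJ}(z)=\big\langle\Phi_{0}\big\vert\{\alpha_{I},\alpha_{J}^{\dagger}\}\big\vert\Phi_{0}\big\rangle+[\mathbf{G}(z)\mathcal{H}_{0}]_{IJ}=\delta_{IJ}+[\mathbf{G}(z)\mathcal{H}_{0}]_{IJ},
\end{equation*}
the last equality being the canonical anticommutation relations, which give $\{\alpha_{I},\alpha_{J}^{\dagger}\}=\delta_{IJ}$ in all four Nambu sectors. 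Equivalently $[\mathbf{G}(z)\mathbf{\Sigma}(z)]_{IJ}=0$ for every environment index $J$; by the analogous computation with the resolvent manipulations on the other side, $[\mathbf{\Sigma}(z)\mathbf{G}(z)]_{IJ}=0$ for every environment index $I$. Left-multiplying the first identity by $\mathbf{G}(z)^{-1}$ kills every environment column of $\mathbf{\Sigma}(z)$, and right-multiplying the second by $\mathbf{G}(z)^{-1}$ kills every environment row; since in the displayed $(p,q,p,q)$ block partition the environment indices are precisely the second and fourth blocks, this is exactly the claimed pattern, with $\Sigma_{p}^{\mathrm{hp}},\Sigma_{p}^{\mathrm{hh}},\Sigma_{p}^{\mathrm{pp}},\Sigma_{p}^{\mathrm{ph}}$ the surviving $p\times p$ sub-blocks.

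I expect the only genuine obstacle to be getting the Nambu analogue of Lemma~\ref{lem:key} exactly right: one must check that the matrix governing $[\hat{H}_{0},\alpha_{J}^{\dagger}]$ coincides, signs and complex conjugates included, with the very matrix $\mathcal{H}_{0}$ used to \emph{define} $\mathbf{\Sigma}$, and one must keep careful track of which environment indices correspond to which of the four blocks in the statement. Beyond that there is nothing new: the resolvent manipulation, the use of the eigen-relation for $\vert\Phi_{0}\rangle$, and the boundary term are identical to the zero-temperature case, and no analytic issues beyond those already handled in Appendix~\ref{sec:zero} arise (in particular the ``up to removable discontinuities'' caveat is dealt with there).
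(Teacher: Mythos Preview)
Your proposal is correct and uses the same underlying ingredients as the paper---the resolvent insertion $(z\mp(\hat H-E_0))$, the eigenrelation $(\hat H-E_0)\vert\Phi_0\rangle=0$, and the commutator identities for $[\hat H_0,a_j^\dagger]$ and $[\hat H_0,a_j]$ (the paper isolates these as Lemma~\ref{lem:key2}). The organizational difference is that you package the four Green's functions into a single $2d\times 2d$ Nambu object and run the zero-temperature argument once with the doubled mode set $\alpha_I$, whereas the paper carries out the four block identities $G^{\mathrm{hp}}[z-h]-G^{\mathrm{hh}}\Delta^\dagger=I_d$, etc., separately, with eight explicit computations (one for each $\pm$ piece of each block). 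Your formulation is more compact and makes transparent that the anomalous theorem is literally an instance of the particle-number-conserving one after doubling; the paper's version has the advantage of displaying each block relation and its boundary term explicitly, so that the roles of the antisymmetry of $\Delta$ and the Hermiticity of $h$ are visible at each step rather than hidden inside the verification that $[\hat H_0,\alpha_J^\dagger]=\sum_K(\mathcal H_0)_{KJ}\alpha_K^\dagger$. Either presentation is fine; yours is arguably the more conceptual one.
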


\begin{rem}
Note that in the case $p=0$ we recover the formula 
\[
\mathbf{G}(z) = \left[ z- \left(\begin{array}{cc}
{h} & {\Delta}\\
-\overline{{\Delta}} & -\overline{{h}}
\end{array}\right) \right]^{-1}
\]
for the non-interacting anomalous Green's function.
.\end{rem}

Recall from Lemma \ref{lem:key} that 
\[
[\hat{H}_{\mathrm{NA}},a_{j}^{\dagger}]=\sum_{\gamma}h_{kj}a_{k}^{\dagger}
\]
 and 
\[
[\hat{H}_{\mathrm{NA}},a_{j}]=-\sum_{\gamma}h_{jk}a_{k}^{\dagger}.
\]
Before proceeding with the proof of Theorem \ref{thm:anom}, we supplement this result with a further simple lemma: 
\begin{lem}
\label{lem:key2}
Let $\hat{H}_{\mathrm{A}} = \frac{1}{2} \sum_{i,j=1}^d \Delta_{ij}
a_i^\dagger a_j^\dagger$ with $\Delta = (\Delta_{ij})$ being a complex
antisymmetric matrix. Then 
\[
[\hat{H}_{\mathrm{A}},a_{j}^{\dagger}]=0,\quad[\hat{H}_{\mathrm{A}}^{\dagger},a_{j}]=0, 
\quad [\hat{H}_{\mathrm{A}},a_{j}]=\sum_{k}\Delta_{kj}a_{k}^{\dagger},
\quad [\hat{H}_{\mathrm{A}}^{\dagger},a_{j}^{\dagger}]=\sum_{k}\overline{\Delta}_{jk}a_{k}.
\]
\end{lem}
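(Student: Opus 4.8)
The plan is to verify the four identities by direct computation with the fermionic canonical anticommutation relations (CAR), $\{a_i,a_j\} = \{a_i^\dagger,a_j^\dagger\} = 0$ and $\{a_i,a_j^\dagger\} = \delta_{ij}$, together with the antisymmetry $\Delta_{ij} = -\Delta_{ji}$. A useful reduction is that two of the four identities follow from the other two by taking adjoints and using $[A,B]^\dagger = -[A^\dagger,B^\dagger]$; thus only the first and third statements require genuine calculation.

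For $[\hat{H}_{\mathrm{A}},a_j^\dagger] = 0$, I would note that $a_i^\dagger a_k^\dagger$ commutes with $a_j^\dagger$: commuting $a_j^\dagger$ past the pair $a_i^\dagger a_k^\dagger$ costs two sign flips from the CAR, whose product is $+1$. Summing against $\tfrac12 \Delta_{ik}$ gives $[\hat{H}_{\mathrm{A}},a_j^\dagger] = 0$ at once, and taking the adjoint yields $[\hat{H}_{\mathrm{A}}^\dagger,a_j] = 0$, the second identity.

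For the third identity I would compute the elementary commutator $[a_i^\dagger a_k^\dagger, a_j]$ by pushing $a_j$ to the right, first through $a_i^\dagger$ via $a_j a_i^\dagger = \delta_{ij} - a_i^\dagger a_j$ and then through $a_k^\dagger$ via $a_j a_k^\dagger = \delta_{jk} - a_k^\dagger a_j$; the term with $a_j$ on the far right cancels, leaving $[a_i^\dagger a_k^\dagger, a_j] = \delta_{jk} a_i^\dagger - \delta_{ij} a_k^\dagger$. Contracting against $\tfrac12\Delta_{ik}$ and relabeling indices produces $\tfrac12\bigl(\sum_k \Delta_{kj} a_k^\dagger - \sum_k \Delta_{jk} a_k^\dagger\bigr)$, and the antisymmetry $\Delta_{jk} = -\Delta_{kj}$ collapses this to $\sum_k \Delta_{kj} a_k^\dagger$. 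Taking the adjoint of this relation and invoking antisymmetry once more gives the fourth identity $[\hat{H}_{\mathrm{A}}^\dagger, a_j^\dagger] = \sum_k \overline{\Delta}_{jk} a_k$.

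Every step is an application of the CAR and of the antisymmetry of $\Delta$, so there is no real obstacle; the only point requiring care is the bookkeeping of signs in the anticommutators and the consistent use of $\Delta_{jk} = -\Delta_{kj}$ when symmetrizing summation indices. This lemma plays, for the anomalous part $\hat{H}_{\mathrm{A}} + \hat{H}_{\mathrm{A}}^\dagger$ of the Hamiltonian, the role that Lemma \ref{lem:key} plays for the non-anomalous part, and it feeds directly into the proof of Theorem \ref{thm:anom} through the commutators of $a_i$ and $a_i^\dagger$ with $\hat{H}_0$.
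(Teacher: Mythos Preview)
Your proof is correct and follows essentially the same route as the paper: the first two identities are dismissed as immediate, the third is obtained by a direct CAR computation that produces the two $\delta$-terms and then invokes the antisymmetry of $\Delta$, and the fourth is deduced from the third by taking adjoints. The only cosmetic difference is that the paper expands $\hat{H}_{\mathrm{A}} a_j$ directly rather than first isolating the elementary commutator $[a_i^\dagger a_k^\dagger, a_j]$, but the underlying manipulations are identical.
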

\begin{proof}
The first two identities are obvious, and the fourth follows from the third by taking Hermitian conjugates and using the 
antisymmetry of $\Delta$.
To see the claimed third identity, simply compute 
\begin{eqnarray*}
\hat{H}_{\mathrm{A}}a_{j} & = & \frac{1}{2}\sum_{ik}\Delta_{ik}a_{i}^{\dagger}a_{k}^{\dagger}a_{j}\\
 & = & \frac{1}{2}\sum_{ik}\Delta_{ik}a_{i}^{\dagger}\delta_{jk}-\frac{1}{2}\sum_{ik}\Delta_{ik}a_{i}^{\dagger}a_{j}a_{k}^{\dagger}\\
 & = & \frac{1}{2}\sum_{i}\Delta_{ij}a_{i}^{\dagger}-\frac{1}{2}\sum_{ik}\Delta_{ik}\delta_{ij}a_{k}^{\dagger}+\frac{1}{2}\sum_{ik}\Delta_{ik}a_{j}a_{i}^{\dagger}a_{k}^{\dagger}\\
 & = & \frac{1}{2}\sum_{k}\Delta_{kj}a_{k}^{\dagger}-\frac{1}{2}\sum_{k}\Delta_{jk}a_{k}^{\dagger}+a_{j}\hat{H}_{\mathrm{A}}\\
 & = & \sum_{k}\Delta_{kj}a_{k}^{\dagger}+a_{j}\hat{H}_{\mathrm{A}}.
\end{eqnarray*}
\end{proof}

\begin{proof}
(Of Theorem \ref{thm:anom}.)
Throughout we will often use $\langle\,\cdot\,\rangle$
to indicate the expectation $\langle\Phi_{0}\vert\,\cdot\,\vert\Phi_{0}\rangle$.

Now it suffices to show the following sparsity pattern
\[
\mathbf{G}(z) \mathbf{\Sigma(z)} =  
    \left(\begin{array}{cccc}
 * & 0 & * & 0  \\
* & 0 & * & 0  \\
* & 0 & * & 0  \\
* & 0 & * & 0 
\end{array}\right), 
\quad 
 \mathbf{\Sigma(z)} \mathbf{G}(z)=  
    \left(\begin{array}{cccc}
 * & * & * & *  \\
0 & 0 & 0 & 0  \\
* & * & * & *  \\
0 & 0 & 0 & 0 
\end{array}\right). 
\]
We will only prove the first of these claims; the other follows by similar reasoning. Note that this first 
claim is equivalent to the fact that each of the following equalities
holds \emph{along the last $d-p$ columns}:
\[
G^{\mathrm{hp}}[z-{h}]-G^{\mathrm{hh}}{\Delta}^{\dagger}={I_d},
\]
\[
-G^{\mathrm{hp}}{\Delta}+G^{\mathrm{hh}}[z+{h}^{T}]={0},
\]
\[
G^{\mathrm{pp}}[z-{h}]-G^{\mathrm{ph}}{\Delta}^{\dagger}={0},
\]
\[
-G^{\mathrm{pp}}{\Delta}+G^{\mathrm{ph}}[z+{h}^{T}]={I_d}.
\]

Now we begin the computations. In the following we assume that $j>p$.
Since $(\hat{H}-E_{0})\vert\Phi_{0}\rangle={0}$, we have 
\begin{eqnarray*}
zG_{ij}^{\mathrm{hp},+}(z) & = & \langle\Phi_{0}\vert a_{i}\frac{1}{z-(\hat{H}-E_{0})}a_{j}^{\dagger}(z-(\hat{H}-E_{0}))\vert\Phi_{0}\rangle\\
 & = & \langle a_{i}a_{j}^{\dagger}\rangle+\langle a_{i}\frac{1}{z-(\hat{H}-E_{0})}[\hat{H},a_{j}^{\dagger}]\rangle\\
 & = & \langle a_{i}a_{j}^{\dagger}\rangle+\langle a_{i}\frac{1}{z-(\hat{H}-E_{0})}[\hat{H}_{\mathrm{NA}},a_{j}^{\dagger}]\rangle+\langle a_{i}\frac{1}{z-(\hat{H}-E_{0})}[\hat{H}_{\mathrm{A}}^{\dagger},a_{j}^{\dagger}]\rangle\\
 & = & \langle a_{i}a_{j}^{\dagger}\rangle+\sum_{k}\langle a_{i}\frac{1}{z-(\hat{H}-E_{0})}a_{k}^{\dagger}\rangle h_{kj}+\sum_{k}\langle a_{i}\frac{1}{z-(\hat{H}-E_{0})}a_{k}\rangle\overline{\Delta}_{jk}\\
 & = & \langle a_{i}a_{j}^{\dagger}\rangle+\left[G^{\mathrm{hp},+}{h}\right]_{ij}+\left[G^{\mathrm{hh},+}{\Delta}^{\dagger}\right]_{ij}.
\end{eqnarray*}
 Similarly, 
\begin{eqnarray*}
zG_{ij}^{\mathrm{hp},-}(z) & = & \langle\Phi_{0}\vert(z+(\hat{H}-E_{0}))a_{j}^{\dagger}\frac{1}{z+(\hat{H}-E_{0})}a_{i}\vert\Phi_{0}\rangle\\
 & = & \langle a_{j}^{\dagger}a_{i}\rangle+\langle[\hat{H},a_{j}^{\dagger}]\frac{1}{z+(\hat{H}-E_{0})}a_{i}\rangle\\
 & = & \langle a_{j}^{\dagger}a_{i}\rangle+\langle[\hat{H}_{\mathrm{NA}},a_{j}^{\dagger}]\frac{1}{z+(\hat{H}-E_{0})}a_{i}\rangle+\langle[\hat{H}_{\mathrm{A}}^{\dagger},a_{j}^{\dagger}]\frac{1}{z+(\hat{H}-E_{0})}a_{i}\rangle\\
 & = & \langle a_{j}^{\dagger}a_{i}\rangle+\sum_{k}\langle a_{k}^{\dagger}\frac{1}{z+(\hat{H}-E_{0})}a_{i}\rangle h_{kj}+\sum_{k}\langle a_{k}\frac{1}{z+(\hat{H}-E_{0})}a_{i}\rangle\overline{\Delta}_{jk}\\
 & = & \langle a_{j}^{\dagger}a_{i}\rangle+\left[G^{\mathrm{hp},-}{h}\right]_{ij}+\left[G^{\mathrm{hh},-}{\Delta}^{\dagger}\right]_{ij}.
\end{eqnarray*}
 Therefore, adding our results and recognizing that $\langle a_{i}a_{j}^{\dagger}\rangle+\langle a_{j}^{\dagger}a_{i}\rangle=\delta_{ij}$,
we obtain 
\[
zG_{ij}^{\mathrm{hp}}=\delta_{ij}+\left[G^{\mathrm{hp}}{h}\right]_{ij}+\left[G^{\mathrm{hh}}{\Delta}^{\dagger}\right]_{ij} 
\]
for all $j>p$, which implies our first desired result.

Next compute 
\begin{eqnarray*}
zG_{ij}^{\mathrm{hh},+}(z) & = & \langle\Phi_{0}\vert a_{i}\frac{1}{z-(\hat{H}-E_{0})}a_{j}(z-(\hat{H}-E_{0}))\vert\Phi_{0}\rangle\\
 & = & \langle a_{i}a_{j}\rangle+\langle a_{i}\frac{1}{z-(\hat{H}-E_{0})}[\hat{H},a_{j}]\rangle\\
 & = & \langle a_{i}a_{j}\rangle+\langle a_{i}\frac{1}{z-(\hat{H}-E_{0})}[\hat{H}_{\mathrm{NA}},a_{j}]\rangle+\langle a_{i}\frac{1}{z-(\hat{H}-E_{0})}[\hat{H}_{\mathrm{A}},a_{j}]\rangle\\
 & = & \langle a_{i}a_{j}\rangle-\sum_{k}\langle a_{i}\frac{1}{z-(\hat{H}-E_{0})}a_{k}\rangle h_{jk}+\sum_{k}\langle a_{i}\frac{1}{z-(\hat{H}-E_{0})}a_{k}^{\dagger}\rangle\Delta_{kj}\\
 & = & \langle a_{i}a_{j}\rangle-\left[G^{\mathrm{hh},+}{h}^{T}\right]_{ij}+\left[G^{\mathrm{hp},+}{\Delta}\right]_{ij},
\end{eqnarray*}
 and 
\begin{eqnarray*}
zG_{ij}^{\mathrm{hh},-}(z) & = & \langle\Phi_{0}\vert(z+(\hat{H}-E_{0}))a_{j}\frac{1}{z+(\hat{H}-E_{0})}a_{i}\vert\Phi_{0}\rangle\\
 & = & \langle a_{j}a_{i}\rangle+\langle[\hat{H},a_{j}]\frac{1}{z+(\hat{H}-E_{0})}a_{i}\rangle\\
 & = & \langle a_{j}a_{i}\rangle+\langle[\hat{H}_{\mathrm{NA}},a_{j}]\frac{1}{z+(\hat{H}-E_{0})}a_{i}\rangle+\langle[\hat{H}_{\mathrm{A}},a_{j}]\frac{1}{z+(\hat{H}-E_{0})}a_{i}\rangle\\
 & = & \langle a_{j}a_{i}\rangle-\sum_{k}\langle a_{k}\frac{1}{z+(\hat{H}-E_{0})}a_{i}\rangle h_{jk}+\sum_{k}\langle a_{k}^{\dagger}\frac{1}{z+(\hat{H}-E_{0})}a_{i}\rangle\Delta_{kj}\\
 & = & \langle a_{j}a_{i}\rangle-\left[G^{\mathrm{hh},-}{h}^{T}\right]_{ij}+\left[G^{\mathrm{hp},-}{\Delta}\right]_{ij}.
\end{eqnarray*}
 Adding our results and recognizing that $\langle a_{i}a_{j}\rangle+\langle a_{j}a_{i}\rangle=0$,
we obtain our second desired result.

Next compute 
\begin{eqnarray*}
zG_{ij}^{\mathrm{pp},+}(z) & = & \langle\Phi_{0}\vert a_{i}^{\dagger}\frac{1}{z-(\hat{H}-E_{0})}a_{j}^{\dagger}(z-(\hat{H}-E_{0}))\vert\Phi_{0}\rangle\\
 & = & \langle a_{i}^{\dagger}a_{j}^{\dagger}\rangle+\langle a_{i}^{\dagger}\frac{1}{z-(\hat{H}-E_{0})}[\hat{H},a_{j}^{\dagger}]\rangle\\
 & = & \langle a_{i}^{\dagger}a_{j}^{\dagger}\rangle+\langle a_{i}^{\dagger}\frac{1}{z-(\hat{H}-E_{0})}[\hat{H}_{\mathrm{NA}},a_{j}^{\dagger}]\rangle+\langle a_{i}^{\dagger}\frac{1}{z-(\hat{H}-E_{0})}[\hat{H}_{\mathrm{A}}^{\dagger},a_{j}^{\dagger}]\rangle\\
 & = & \langle a_{i}^{\dagger}a_{j}^{\dagger}\rangle+\sum_{k}\langle a_{i}^{\dagger}\frac{1}{z-(\hat{H}-E_{0})}a_{k}^{\dagger}\rangle h_{kj}+\sum_{k}\langle a_{i}^{\dagger}\frac{1}{z-(\hat{H}-E_{0})}a_{k}\rangle\overline{\Delta}_{jk}\\
 & = & \langle a_{i}^{\dagger}a_{j}^{\dagger}\rangle+\left[G^{\mathrm{pp},+}{h}\right]_{ij}+\left[G^{\mathrm{ph},+}{\Delta}^{\dagger}\right]_{ij},
\end{eqnarray*}
 and 
\begin{eqnarray*}
zG_{ij}^{\mathrm{pp},-}(z) & = & \langle\Phi_{0}\vert(z+(\hat{H}-E_{0}))a_{j}^{\dagger}\frac{1}{z+(\hat{H}-E_{0})}a_{i}^{\dagger}\vert\Phi_{0}\rangle\\
 & = & \langle a_{j}^{\dagger}a_{i}^{\dagger}\rangle+\langle[\hat{H},a_{j}^{\dagger}]\frac{1}{z+(\hat{H}-E_{0})}a_{i}^{\dagger}\rangle\\
 & = & \langle a_{j}^{\dagger}a_{i}^{\dagger}\rangle+\langle[\hat{H}_{\mathrm{NA}},a_{j}^{\dagger}]\frac{1}{z+(\hat{H}-E_{0})}a_{i}^{\dagger}\rangle+\langle[\hat{H}_{\mathrm{A}}^{\dagger},a_{j}^{\dagger}]\frac{1}{z+(\hat{H}-E_{0})}a_{i}^{\dagger}\rangle\\
 & = & \langle a_{j}^{\dagger}a_{i}^{\dagger}\rangle+\sum_{k}\langle a_{k}^{\dagger}\frac{1}{z+(\hat{H}-E_{0})}a_{i}^{\dagger}\rangle h_{kj}+\sum_{k}\langle a_{k}\frac{1}{z+(\hat{H}-E_{0})}a_{i}^{\dagger}\rangle\overline{\Delta}_{jk}\\
 & = & \langle a_{j}^{\dagger}a_{i}^{\dagger}\rangle+\left[G^{\mathrm{pp},-}{h}\right]_{ij}+\left[G^{\mathrm{ph},-}{\Delta}^{\dagger}\right]_{ij},
\end{eqnarray*}
yielding our third desired result. 

Finally, compute 
\begin{eqnarray*}
zG_{ij}^{\mathrm{ph},+}(z) & = & \langle\Phi_{0}\vert a_{i}^{\dagger}\frac{1}{z-(\hat{H}-E_{0})}a_{j}(z-(\hat{H}-E_{0}))\vert\Phi_{0}\rangle\\
 & = & \langle a_{i}^{\dagger}a_{j}\rangle+\langle a_{i}^{\dagger}\frac{1}{z-(\hat{H}-E_{0})}[\hat{H},a_{j}]\rangle\\
 & = & \langle a_{i}^{\dagger}a_{j}\rangle+\langle a_{i}^{\dagger}\frac{1}{z-(\hat{H}-E_{0})}[\hat{H}_{\mathrm{NA}},a_{j}]\rangle+\langle a_{i}^{\dagger}\frac{1}{z-(\hat{H}-E_{0})}[\hat{H}_{\mathrm{A}},a_{j}]\rangle\\
 & = & \langle a_{i}^{\dagger}a_{j}\rangle-\sum_{k}\langle a_{i}^{\dagger}\frac{1}{z-(\hat{H}-E_{0})}a_{k}\rangle h_{jk}+\sum_{k}\langle a_{i}^{\dagger}\frac{1}{z-(\hat{H}-E_{0})}a_{k}^{\dagger}\rangle\Delta_{kj}\\
 & = & \langle a_{i}^{\dagger}a_{j}\rangle-\left[G^{\mathrm{ph},+}{h}^{T}\right]_{ij}+\left[G^{\mathrm{pp},+}{\Delta}\right]_{ij},
\end{eqnarray*}
 and 
\begin{eqnarray*}
zG_{ij}^{\mathrm{ph},-}(z) & = & \langle\Phi_{0}\vert(z+(\hat{H}-E_{0}))a_{j}\frac{1}{z+(\hat{H}-E_{0})}a_{i}^{\dagger}\vert\Phi_{0}\rangle\\
 & = & \langle a_{j}a_{i}\rangle+\langle[\hat{H},a_{j}]\frac{1}{z+(\hat{H}-E_{0})}a_{i}^{\dagger}\rangle\\
 & = & \langle a_{j}a_{i}\rangle+\langle[\hat{H}_{\mathrm{NA}},a_{j}]\frac{1}{z+(\hat{H}-E_{0})}a_{i}^{\dagger}\rangle+\langle[\hat{H}_{\mathrm{A}},a_{j}]\frac{1}{z+(\hat{H}-E_{0})}a_{i}^{\dagger}\rangle\\
 & = & \langle a_{j}a_{i}\rangle-\sum_{k}\langle a_{k}\frac{1}{z+(\hat{H}-E_{0})}a_{i}^{\dagger}\rangle h_{jk}+\sum_{k}\langle a_{k}^{\dagger}\frac{1}{z+(\hat{H}-E_{0})}a_{i}^{\dagger}\rangle\Delta_{kj}\\
 & = & \langle a_{j}a_{i}\rangle-\left[G^{\mathrm{ph},-}{h}^{T}\right]_{ij}+\left[G^{\mathrm{pp},-}{\Delta}\right]_{ij},
\end{eqnarray*}
 yielding the last desired result.\end{proof}

\appendix

\section{Second quantization}\label{sec:secondq}
Here we introduce the formalism of second quantization, with the aim 
of providing enough background and results to make the results of this paper rigorous.
We limit our discussion to fermionic and bosonic Fock spaces with 
\emph{finitely many} states, i.e., finitely many creation and annihilation operators. 
This setting can directly describe lattice models such as the Hubbard model 
in addition to tight-binding approximations of continuum systems. In this sense
we can 
view the set $\{1,\ldots,d\}$ as indexing sites in a lattice model.
More generally, 
one can reduce a continuum problem to this setting via the choice of a finite orbital basis \cite{NegeleOrland1988}.

\subsection{The occupation number construction}
Let $\mathcal{N}_{-1} = \{0,1\}$ 
and $\mathcal{N}_{+1} = \{0,1,2,\ldots \}$. These are the 
sets of allowable occupation numbers of a state in the fermionic and 
bosonic cases, respectively. (Recall that 
the cases $\zeta = -1$ and $\zeta = +1$ indicate, respectively, the cases 
of fermions and bosons.)

Let $d$ be a 
positive integer, the number of \emph{states}, and consider the collection: 
\[
\mathcal{B}_{\zeta,d} =  \left\{ \vert n_1, n_2, \ldots, n_d \rangle \, :\, n_i \in \mathcal{N}_\zeta \right\}.
\]
This set will be the \emph{occupation number basis} for our Fock space $\mathcal{F}_{\zeta,d}$. 
For short, we may indicate the basis elements by $\vert \mathbf{n} \rangle$ for 
 $\mathbf{n} \in  \mathcal{N}_{\zeta}^{d}$.

To define the Fock space, consider the set 
$\mathcal{V}_{\zeta,d}$ of finite formal linear combinations of elements of $\mathcal{B}_{\zeta,d}$.
Then $\mathcal{V}_{\zeta,d}$ is a vector space, and it can be endowed with an inner product 
by stipulating that the elements of $\mathcal{B}_{\zeta,d}$ are orthonormal. (Hence 
$\mathcal{B}_{\zeta,d}$ is an orthonormal basis of $\mathcal{V}_{\zeta,d}$.) For fermions,
$\mathcal{V}_{\zeta,d}$ is finite-dimensional and therefore a Hilbert space, 
but this is not the case for bosons. Therefore we define $\mathcal{F}_{\zeta,d}$ to be the completion of 
$\mathcal{V}_{\zeta,d}$ with respect to the metric induced by its inner product, so 
$\mathcal{F}_{\zeta,d}$ is a Hilbert space, and $\mathcal{B}_{\zeta,d}$ is a complete 
orthonormal set (in fact, a basis if $\zeta = -1$). 

In accordance with Dirac's bra-ket notation, we will denote elements of 
$\mathcal{F}_{\zeta,d}$ with the notation $\vert \phi \rangle$ (where 
$\phi$ can be thought of as a symbolic label), and we denote the adjoint
of an element
$\vert \phi \rangle \in \mathcal{F}_{\zeta,d}$ by $\langle \phi \vert$. 
Inner products may then be denoted $\langle \psi \vert \phi \rangle$, and 
we denote the induced norm on $\mathcal{F}_{\zeta,d}$ by 
$\Vert \phi \Vert = \sqrt{ \langle \phi \vert \phi \rangle}$.

The reader should be careful to distinguish between the \emph{vacuum state} $\vert 0,\ldots,0\rangle$, 
denoted $\vert -\rangle$ for short, and the \emph{zero vector} of $\mathcal{F}_{\zeta,d}$, denoted 
simply as $0$,\footnote{For the zero vector we forgo the bra-ket notation here to avoid confusion.} which is the linear combination of elements of $\mathcal{B}_{\zeta,d}$ in which 
all coefficients are zero. In particular the vacuum state has norm $1$ and the zero vector has norm $0$.

\subsection{Creation and annihilation operators}
The annihilation operators $a_i$ are linear operators 
$\mathcal{V}_{\zeta,d} \ra \mathcal{V}_{\zeta,d}$, defined 
by their action on the basis $\mathcal{B}_{\zeta,d}$:
\[
a_i  \vert \mathbf{n} \rangle = 
\begin{cases} 
0, & n_i =0 \\
\zeta^{\sum_{j<i} n_j} \sqrt{n_i}\  \vert n_1, \ldots, n_{i-1}, n_i - 1, n_{i+1} \ldots n_d \rangle, & n_i \neq 0.
\end{cases}
\]
Meanwhile the creation operators $a_i^\dagger$  are linear operators 
$\mathcal{V}_{\zeta,d} \ra \mathcal{V}_{\zeta,d}$, defined 
by their action on the basis $\mathcal{B}_{\zeta,d}$:
\[
a_i^\dagger  \vert \mathbf{n} \rangle = 
\begin{cases} 
0, & \zeta = -1, n_i = 1 \\
\zeta^{\sum_{j<i} n_j} \sqrt{n_i + 1}\  \vert n_1, \ldots, n_{i-1}, n_i + 1, n_{i+1} \ldots n_d \rangle, & \textrm{otherwise}.
\end{cases}
\]

In the case of fermions, $\mathcal{V}_{\zeta,d} = \mathcal{F}_{\zeta,d}$ 
is finite-dimensional, so the creation and annihilation operators are defined on 
$\mathcal{F}_{\zeta,d}$, and they are in fact Hermitian adjoints of one another in 
the usual sense as the notation 
suggests. Moreover, $a_i$ and $a_i^\dagger$ have operator norm $1$, which 
in particular remains bounded in the limit of infinitely many states. In fact, 
as this observation suggests,
the appropriate creation and annihilation operators on fermionic Fock spaces generated 
by infinitely many states (which we do not define or consider these here) 
are bounded operators with operator norm $1$.

By contrast, in the case of bosons, $a_i$ and $a_i^\dagger$ are 
\emph{unbounded} operators, even for finite $d$. Thus $a_i$ and $a_i^\dagger$ 
are only densely defined (unbounded) operators on $\mathcal{F}_{\zeta,d}$. 
In fact, adjoint operators can be defined even for operators that are only 
densely defined on a Hilbert space ~\cite{Rudin1991,GustafsonSigal2011}, and in this sense 
$a_i$ and $a_i^\dagger$ are Hermitian adjoints of one another. For the reader 
familiar only with adjoints of bounded operators, one can merely consider 
the `$\dagger$' as a notation.

The most important feature of the creation and annihilation operators 
are the (anti)commutation relations. Denoting commutator of operators $A,B$ 
by $[A,B]_{+1} := AB - BA$ 
and the anticommutator by $[A,B]_{-1} = AB + BA$, we have 
\begin{equation}
\label{eqn:commutation}
[a_i, a_j]_{\zeta} = [a_i^\dagger, a_j^\dagger]_{\zeta} = 0,\quad [a_i,a_j^\dagger]_{\zeta} = \delta_{ij} \,\mathrm{Id},
\end{equation}
on $\mathcal{V}_{\zeta,d}$. These relations can be readily verified from the definitions 
of $a_i$ and $a_i^\dagger$.

We say that a composition of creation and annihilation operators 
such as $a_i^\dagger a_j$ is \emph{normally ordered} if all of the
creation operators appear to the left of all of the annihilation operators.
Any composition of creation and annihilation operators can be converted 
to a linear combination of normally ordered operators via the (anti)commutation 
relations.

For a unitary transformation $T: \Rd \ra \Rd$, one can 
define new operators $\tilde{a}_i^\dagger = \sum_{j=1}^d T_{ij} a_j^\dagger$ 
and $\tilde{a}_i = \sum_{j=1}^d \overline{T}_{ij} a_j$. 
These can be viewed as creation and annihilation operators, respectively, in that 
they satisfy the same commutation relations as in \eqref{eqn:commutation}. 
One can in turn view these as generators for our Fock space inducing 
a different occupation number basis.

\subsection{Number operators and eigenspaces}
For each state we define a \emph{number operator}
\[
\hat{n}_i := a_i^\dagger a_i,
\]
which is a linear operator  $\mathcal{V}_{\zeta,d} \ra  \mathcal{V}_{\zeta,d}$. 
In the case of bosons $\hat{n}_i$ can be viewed as an unbounded, 
self-adjoint, densely defined operator on $\mathcal{F}_{\zeta,d}$. 
Note that the number operators all commute, i.e., $[n_i, n_j]_{+} = 0$ 
for all $i,j$.

We also 
define the \emph{total number operator}
by 
\[
\hat{N} := \sum_{i=1}^d \hat{n}_i.
\]
The set of eigenvectors of $\hat{n}$ (as a linear transformation $\mathcal{V}_{\zeta,d} \ra \mathcal{V}_{\zeta,d}$) 
is precisely $\mathcal{B}_{\zeta,d}$, and each eigenvector $\vert \mathbf{n} \rangle$ has eigenvalue 
$\sum_{i=1}^d n_i$. 
Thus the set of eigenvalues is given by $\{0,1,\ldots,d\}$ in the case of fermions and $\{0,1,\ldots \}$ 
in the case of bosons.

Then we define the \emph{$N$-particle subspace} 
to be the
$N$-eigenspace of $\hat{N}$, which is \emph{finite-dimensional} (even for bosons), 
and we denote it by $\mathcal{F}_{\zeta,d}^{(N)}$. Then we can write
\[
\mathcal{V}_{\zeta,d} = \bigoplus_{N=0}^{\infty} \mathcal{F}_{\zeta,d}^{(N)}.
\]
The $N$-eigenspace is understood to be $\{ 0\}$ for any integer $N \notin \{0,1,\ldots,d\}$ in 
the case of 
fermions and for any $N \notin \{0,1,\ldots \}$ in the case of bosons.
Notice that $a_i$ maps $\mathcal{F}_{\zeta,d}^{(N)} \ra \mathcal{F}_{\zeta,d}^{(N-1)}$ 
and $a_i^\dagger$ maps $\mathcal{F}_{\zeta,d}^{(N)} \ra \mathcal{F}_{\zeta,d}^{(N+1)}$. 

We say that an operator $A$ on $\mathcal{F}_{\zeta,d}$ 
\emph{conserves particle number} if $A$ maps  $\mathcal{F}_{\zeta,d}^{(N)} 
\ra \mathcal{F}_{\zeta,d}^{(N)}$ for all integers $N$. Evidently any operator such 
as $a_i^\dagger a_j$ in which an equal number of creation and annihilation 
operators appear, as well as any sum of such operators, must conserve particle number.

\subsection{Hamiltonians}
For convenience we shall let $a = (a_1, \ldots, a_d)$ denote the vector 
of annihilation operators, and accordingly $a^\dagger = (a_1^\dagger, \ldots, a_d^\dagger)^T$.
Then consider a \emph{Hamiltonian} $\hat{H} = H(a^\dagger, a)$ that is a  
normally ordered polynomial of creation and annihilation operators. As an operator 
on $\mathcal{V}_{\zeta,d}$, we stipulate that $\hat{H}$ commutes with the total 
number operator $\hat{N}$. Hence $\hat{H}$ 
conserves particle number and is an operator 
$\mathcal{F}_{\zeta,d}^{(N)} \ra \mathcal{F}_{\zeta,d}^{(N)}$ for all $N$, and we demand 
that $\hat{H}$ is self-adjoint as such.

In general we can write 
\[
\hat{H} = \hat{H}_0 + \hat{U},
\]
where 
\[
\hat{H}_0 := a^\dagger h a = \sum_{i,j=1}^d h_{ij} a_i^\dagger a_j
\]
is the \emph{single-particle} (or \emph{non-interacting}) part of the Hamiltonian, 
specified by a Hermitian $d\times d$ matrix $h$, and $\hat{U} = U(a^\dagger, a)$ 
is the \emph{interacting} part. Here $\hat{U}$ is normally ordered and 
commutes with $\hat{N}$ (since $\hat{H}_0$ does), and moreover $\hat{U}$ is self-adjoint on 
$\mathcal{F}_{\zeta,d}^{(N)}$ for all $N$ (since $\hat{H}_0$ is). 

Via a unitary transformation of the creation and annihilation operators, one can without loss of 
generality assume that $h$ is diagonal. However, the utility of this manipulation 
is limited outside of the non-interacting setting because such a transformation 
may complicate the representation of the interaction term $\hat{U}$.

Though we need not define $\hat{U}$ more explicitly for the purposes of this paper, 
for concreteness one might keep in mind the two-body interaction 
\begin{equation}
\label{eqn:twoBody}
\hat{U} = \frac{1}{2} \sum_{ijkl} (ij \vert U \vert kl) a_i^\dagger a_j^\dagger a_l a_k.
\end{equation}

We comment more concretely on how such a second-quantized two-body 
operator may arise from the finite dimensional approximation of a two-body potential in real space.
To construct a finite-state Fock space, 
one first replaces the single-particle Hilbert space
$\mc{H} := H^1(\R^3,\pm \frac12; \mathbb{C}) \subset L^2(\R^3, \pm \frac12; \mathbb{C})$ with a finite-dimensional 
subspace $\mc{H}_{\mc{D}}$ spanned by an orthonormal single-particle basis 
$\mc{D} := \{ \vp_1 ,\ldots, \vp_d\}$. One then defines the $N$-particle space as 
$\mc{F}^{(N)}_{\zeta,\mc{D}} := \mathbf{\Lambda}^N (\mc{H}_{\mc{D}})$ if $\zeta = -1$ and 
as $\mc{F}^{(N)}_{\zeta,\mc{D}} := \mathbf{S}^N (\mc{H}_{\mc{D}})$ if $\zeta = +1$, where 
$\mathbf{\Lambda}^N$ and $\mathbf{S}^N$ denote the $N$-th exterior and 
symmetric powers, respectively. 
Then $\mc{F}_{\zeta,\mc{D}}^{(N)}$ so constructed 
is isomorphic to $\mc{F}_{\zeta,d}^{(N)}$ as above via, in the case $\zeta = -1$, the isomorphism 
$\vert \mathbf{n} \rangle \mapsto \bigwedge_{i=1}^d \vp_{i}^{\wedge n_i}
$, where the wedge in the exponent indicates a wedge power.
The analogous isomorphism holds in the case $\zeta = +1$, with wedges 
replaced by symmetric products. {A change 
of the basis $\mc{D}$ to some $\tilde{\mc{D}} = \{ \tilde{\vp}_1 ,\ldots, \tilde{\vp}_d \}$ 
that is
induced by a unitary transformation $T: \Rd \ra \Rd$ corresponds to a transformation 
of the annihilation operators by $\tilde{a}_i = \sum_{i=1}^d \overline{T}_{ij} a_j$.} 

Under this correspondence, a translation-invariant two-body potential $v(x - y)$ in real space 
yields the tensor elements $(ij \vert U \vert kl)$ can be computed via the following 
two-body integrals \cite{NegeleOrland1988}
\begin{equation}
\label{eqn:twoBodyInt}
(ij \vert U \vert kl) = \int_{\R^3} \int_{\R^3} v(x-y) \vp_i^*(x) \vp_j^* (y) \vp_k (x) \vp_l (y)\,dx\,dy.
\end{equation}
The elements of $h$ are obtained by suitable one-body integrals; see, e.g., \cite{NegeleOrland1988}. 

In the case that $\hat{U} = U(a^\dagger, a)$ depends only on $a_i^\dagger, a_i$ 
for $i=1,\ldots, p$, we say that $\hat{H}$ is an \emph{impurity Hamiltonian}, with 
a \emph{fragment} specified by the indices $1,\ldots,p$. The rest of the indices 
correspond to the \emph{environment}.

\section{The zero-temperature ensemble}
\label{sec:zero}
At zero temperature, typically one first fixes a particle number $N$, and 
attention is restricted to the $N$-particle subspace. Let $\big\vert \Psi_0^{(N)} \big\rangle \in \mc{F}_{\zeta,d}^{(N)}$ 
be the \emph{$N$-particle ground state} of $\hat{H}$, i.e., the minimal 
normalized eigenvector of 
$\hat{H}$ considered as an operator on the $N$-particle subspace.
The role of the density operator is assumed by the 
orthogonal projector $\big\vert \Psi_0^{(N)} \big\rangle \big\langle \Psi_0^{(N)} \big\vert$ 
onto the ground state $\big\vert \Psi_0^{(N)} \big\rangle$, i.e., the 
statistical average of a linear operator $\hat{A}$ (with respect to the $N$-particle 
\emph{canonical ensemble}) is given by 
\[
\langle \hat{A} \rangle_{N} = \big\langle \Psi_0^{(N)} \big\vert \hat{A} \big\vert \Psi_0^{(N)} \big\rangle.
\]

\subsection{Green's functions and the self-energy at zero temperature}
\label{sec:zeroGreens}

For $t \in \R$, we denote the annihilation and creation operators in the Heisenberg representation by 
\begin{equation}
\label{eq:intPic}
a_i(t) := e^{i \hat{H} t} a_i e^{-i \hat{H} t}, \quad a_i^\dagger (t) := e^{i \hat{H} t} a_i^\dagger e^{-i \hat{H} t}.
\end{equation}
Then for a zero-temperature ensemble with $N$ particles, 
the \emph{time-ordered, single-body, real-time Green's function} (which we call the \emph{Green's function} for short) is 
a function $G:\R \times \R \ra \mathbb{C}^{d\times d}$ defined by 
\begin{equation}
\label{eq:rtGreen}
G_{ij} (t,t') = -i\, \big\langle \Psi_0^{(N)} \big\vert\,\mathcal{T}\big\{ a_i(t)  a_j^\dagger (t') \big\} \,\big\vert \Psi_0^{(N)} \big\rangle,  
\end{equation}
where $\mc{T}$ is the \emph{time-ordering operator}, formally defined by 
\begin{equation*}
   \mathcal{T}\big\{ a_i(t)  a_j^\dagger (t') \big\} =
   \begin{cases}
    a_{i}(t)a_{j}^{\dagger}(t'), & t' < t\\
     \zeta a_{j}^{\dagger}(t') a_{i}(t), & t' \geq t.
  \end{cases}
  \label{}
\end{equation*}
Note that $\mathcal{T}$ is not really an operator and it is 
interpreted merely via the \emph{symbolic} content of its argument.

 We can write 
 \[
 G(t,t') = G^{+} (t,t') + G^{-} (t,t'),
 \]
 where 
 \begin{equation*}
 \begin{aligned}
 i G^{+}_{ij} (t,t') := & \  \big\langle \Psi_0^{(N)} \big\vert a_{i}(t)a_{j}^{\dagger}(t') \big\vert \Psi_0^{(N)} \big\rangle \theta(t-t'), \\
 i G^{-}_{ij} (t,t') := & \  \zeta \big\langle \Psi_0^{(N)} \big\vert  a_{j}^{\dagger}(t') a_{i}(t) \big\vert \Psi_0^{(N)} \big\rangle (1-\theta(t-t')),
 \end{aligned}
 \end{equation*}
 with 
 \begin{equation}
 \label{eq:theta}
 \theta(s) := \begin{cases}
    1, & s > 0\\ 
     0, & s \leq 0.
  \end{cases}
 \end{equation}
 
 It is easy to show that $G(t,t')$, $G^{+}(t,t')$, and $G^{-}(t,t')$ depend only on $t-t'$, so we 
 can define $G(t) := G(t,0)$, $G^{+}(t): =G^{+}(t,0)$, and $G^{-}(t): =G^{-}(t,0)$ and 
 consider these objects without any loss of information. It is then equivalent to 
 consider the Fourier transforms 
 \[
 G(\omega) := \int_\R G(t) e^{i\omega t - \eta \vert t \vert}   \,dt 
 \]
 and likewise $G^{+}(\omega)$ and $G^{-}(\omega)$ defined similarly, so 
 \[
 G(\omega) = G^{+}(\omega) + G^{-}(\omega).
 \]
 Here $\eta$ is interpreted as a positive, infinitesimally small quantity needed 
 to ensure the convergence of the relevant integrals, and $G(\omega)$, $G^{+}(\omega)$, and $G^{-}(\omega)$ 
 are not really functions, but rather distributions on $\R$ defined via the limit $\eta \ra 0^+$.
 
One can show that 
\[
G_{ij}^{+} (\omega) = \big\langle \Psi_0^{(N)} \big\vert a_i \frac{1}{\omega - (\hat{H} - E_0^{(N)}) + i \eta} a_j^{\dagger}  \big\vert  \Psi_0^{(N)} \big\rangle
\]
and
\[
G_{ij}^{-} (\omega) = -\zeta \big\langle \Psi_0^{(N)} \big\vert a_j^{\dagger} \frac{1}{\omega + (\hat{H} - E_0^{(N)}) - i \eta} a_i  \big\vert  \Psi_0^{(N)} \big\rangle,
\]
where $E_0^{(N)}$ is the energy of the $N$-particle ground state, i.e., $\hat{H}  \big\vert  \Psi_0^{(N)} \big\rangle = E_0  \big\vert  \Psi_0^{(N)} \big\rangle$.

Now we can think of $G^{\pm}$ as the restriction to the real axis of the rational function 
$G^{\pm} : \mathbb{C} \ra \mathbb{C}^{d\times d}$ defined by 
\begin{equation*}
\begin{aligned}
G_{ij}^{+} (z) := & \ \big\langle \Psi_0^{(N)} \big\vert a_i \frac{1}{z - (\hat{H} - E_0^{(N)}) } a_j^{\dagger}  \big\vert  \Psi_0^{(N)} \big\rangle \\
G_{ij}^{-} (z) := & \ -\zeta \big\langle \Psi_0^{(N)} \big\vert a_j^{\dagger} \frac{1}{z + (\hat{H} - E_0^{(N)}) } a_i  \big\vert  \Psi_0^{(N)} \big\rangle,
\end{aligned}
\end{equation*}
and we can define $G(z) := G^{+}(z) + G^{-}(z)$ accordingly to be rational on $\mathbb{C}$.

Note that here we have left out the $\pm i \eta$ in the denominators, which specified whether 
poles should be viewed as being infinitesimally above or below the real axis. This erases 
the distinction between the time-ordered Green's function and the advanced and retarded 
Green's functions, which we do not define here, though see \cite{NegeleOrland1988} for
details. In fact 
the distinction does not matter for our sparsity results, which applies equally well in all of these 
cases. 

The self-energy is the rational function $\Sigma: \mathbb{C} \ra \mathbb{C}^{d\times d}$ defined by 
\[
\Sigma(z) := z - h - G(z)^{-1}.
\]

\section{The finite-temperature ensemble}
\label{sec:finite}
At inverse temperature $\beta \in (0,\infty)$, the
\emph{partition function}
is defined by 
\begin{equation}
\label{eq:partition}
Z := \Tr[e^{-\beta (\hat H - \mu \hat{N})}].
\end{equation}
where `$\Tr$' indicates the Fock space trace. 
Here $\mu \in \R$ is the \emph{chemical potential}, 
but before commenting on its role,
some further 
elaboration on the trace is owed in the bosonic case, in which 
the Fock space is infinite-dimensional.

By assumption, $\hat{H}$ conserves particle number, i.e., it maps
$\mc{F}_{\zeta,d}^{(N)}$ to itself for all $N$. Thus $e^{-\beta (\hat H - \mu \hat{N})}$ does as well 
and can be viewed as 
a positive-definite operator on each $\mc{F}_{\zeta,d}^{(N)}$.
The trace can then be constructed as 
\[
\Tr[e^{-\beta (\hat H - \mu \hat{N})}] = 
\sum_{N=0}^\infty 
\Tr_{N} [e^{-\beta (\hat H - \mu \hat{N})}] = 
\sum_{N=0}^\infty e^{\beta \mu N} \,
\Tr_{N} [e^{-\beta \hat H}], 
\]
where `$\Tr_N$' indicates the 
trace on the $N$-particle subspace. Since each of the summands is positive, 
$\Tr[e^{-\beta (\hat H - \mu \hat{N})}] \in (0,+\infty]$ is well-defined. 

More generally, the trace is defined for all operators in the trace class of 
$\mc{F}_{\zeta,d}$, i.e., the set of bounded linear 
operators $\hat{O}:\mc{F}_{\zeta,d} \ra \mc{F}_{\zeta,d}$ 
for which 
\[
\sum_{ \mathbf{n} \in \mc{N}_{\zeta}^d} \langle \mathbf{n} \vert \,(\hat{O}^\dagger \hat{O})^{1/2} \, \vert \mathbf{n} \rangle < +\infty,
\]
in which case 
\[
\Tr[\hat{O}] = \sum_{ \mathbf{n} \in \mc{N}_{\zeta}^d} \langle \mathbf{n} \vert \hat{O} \vert \mathbf{n} \rangle.
\]
 See, e.g., \cite{ReedSimon1980} for further details on trace class operators.

Now since the partition function can be viewed as a normalization factor, 
the scenario $Z = +\infty$ is to be avoided. It is now that we turn to the
chemical potential. We can view $Z$ as defined above as a function of 
$\mu$. Evidently $\mu \mapsto Z(\mu)$ is non-decreasing.

First we want to rule out the case that $Z \equiv +\infty$. Unfortunately, 
this case cannot be ruled out without further assumptions! To see why, 
suppose that $d = 1$ (so write $a = a_1$), and let $\hat{H} = -a^\dagger a - a^\dagger a^\dagger a  a =  - a^\dagger a a^\dagger a = - \Hat{N}^2$. 
Then 
\[
\Tr[e^{-\beta (\hat H - \mu \hat{N})}] = 
\sum_{N=0}^\infty 
e^{\beta (N^2 + \mu N)}\, \Tr_{N} \left[\mathrm{Id}_{\mc{F}_{\zeta,d}^{(N)}}\right] 
= \sum_{N=0}^\infty  e^{\beta (N^2 + \mu N)}\,\binom{N + d-1}{d-1}
= +\infty, 
\]
for \emph{all} $\mu \in \R$.

We conclude that such a choice of $\hat{H}$ is \emph{unphysical}, and 
to rule out such pathologies, we adopt the following: 
\begin{assumption}
\label{assumption:bosons}
We assume, in the case of bosons, that there exist some positive integer 
$N_0$ and some $\mu \in \R$ such that 
$\hat{H} - \mu \hat{N} \succeq 0$ 
as an operator on all $N$-particle subspaces for $N\geq N_0$. 
(It is equivalent to require 
that there exist $N_0, \mu$ 
such that $\hat{U} - \mu \hat{N} \succeq 0$ on all $N$-particle subspaces 
for $N\geq N_0$.)
\end{assumption}

This condition is satisfied 
in particular if $\hat{U}$ is a two-body interaction as in \eqref{eqn:twoBody}
such that $U_{ik,jl} := (kj\vert U\vert il)$, interpreted as a $d^2 \times d^2$ 
matrix, is positive semidefinite. 
Indeed, in this case, $\hat{U}$ is equal to
(up to a single-body term) 
\[
 \frac{1}{2} \sum_{ijkl} U_{ik,jl} \left[a_i^\dagger a_k\right]^\dagger \left[a_j^\dagger a_l\right] \succeq 0.
\]
If the $(ij\vert U\vert kl)$ are derived from a real-space two-body potential $v$ that 
is a positive semidefinite function (i.e., has nonnegative Fourier transform), 
then it follows from \eqref{eqn:twoBodyInt} that the matrix
$(U_{ik,jl})$ is positive definite as desired. Note that it is possible for $v$ to 
be positive definite but take negative values at long ranges, i.e., $v$ can act attractively 
at long range.

Now that we have argued that Assumption \ref{assumption:bosons} is 
natural, let us see how it guarantees that the 
set $\dom Z := \{ \mu \,:\, Z(\mu) < +\infty\}$ is non-empty. Indeed, choose $\mu'$ negative
enough such that $\hat{H}- \mu' \hat{N} \succeq 0$ as an operator on all 
$N$-particle subspaces, and let $\mu = \mu' - \delta$, where $\delta >0$. 
Then 
\[
\Tr[e^{-\beta (\hat H - \mu \hat{N})}] \leq  
\sum_{N=0}^\infty 
\Tr_{N} [e^{-\beta \delta \hat{N}}] 
= \sum_{N=0}^\infty  e^{-\beta \delta N} \binom{N + d-1}{d-1}.
\]
Now the binomial coefficient in the last expression is $O(N^{d-1})$ as $N\ra +\infty$, so the 
sum converges.

We will always assume in the finite-temperature setting 
that $\mu \in \intdom Z$.
It can be shown that if $\hat{U} = 0$, then  
$\dom Z = \{ \mu \,:\, h \succ \mu \,I_d \}$. Moreover, if there exist 
$N_0, \delta > 0$ such that 
$\hat{U} \succeq \delta \hat{N}^2$ on all $N$-particle subspaces for $N \geq N_0$ 
(which holds in particular if $\hat{U}$ is is a two-body interaction as in \eqref{eqn:twoBody}
where the $d^2 \times d^2$ matrix $U_{ki,jl} := (ij\vert U\vert kl)$ is \emph{positive definite}), 
then $\dom Z = \R$.

Notice that if $\dom Z$ is open, then since $Z$ is increasing we can write 
$\dom Z = (-\infty, \mu_{\mathrm{c}})$ for some $\mu_{\mathrm{c}}$ possibly 
infinite. If $\mu_{\mathrm{c}} < +\infty$, then by Fatou's lemma we have that
$\liminf_{\mu\ra\mu_{\mathrm{c}}^-} Z(\mu) \geq Z(\mu_{\mathrm{c}}) = +\infty$, 
so $Z(\mu) \ra +\infty$ as $\mu \ra \mu_{\mathrm{c}}^-$. 
(And in any case it follows from the definition of $Z$ that $Z(\mu) \ra +\infty$ 
as $\mu \ra +\infty$, so we can write more 
compactly that $Z(\mu) \ra +\infty$ as $\mu \ra  \mu_{\mathrm{c}}$, no matter 
whether $\mu_{\mathrm{c}}$ is finite or infinite.)

The  \emph{grand canonical ensemble} is defined by the \emph{density operator}
\[
\rho := Z^{-1} e^{-\beta (\hat H - \mu \hat{N})}, 
\]
and the statistical average of an operator $\hat{A}$ with respect to the grand canonical 
ensemble is denoted 
\[
\langle \hat{A} \rangle_{\beta,\mu} = \Tr [ \hat{A} \rho] 
\]
 whenever $\hat{A} \rho$ is in the trace class. Note that if $\hat{A}$ conserves 
 particle number then 
\[
\Tr[\hat{A} \rho] = 
\sum_{N=0}^\infty \Tr_{N} [\hat{A} \rho] = 
Z^{-1} \sum_{N=0}^\infty e^{\beta \mu N} \,
\Tr_{N} [\hat{A} e^{-\beta \hat{H}}] 
\]
is defined under the condition that the sum is absolutely convergent, which holds in 
particular if the norm of $\hat{A}$ as an operator on the $N$-particle subspace 
grows only polynomially with $N$, 
via the assumption that $\mu \in \intdom Z$.
When the context is clear 
we simply write $\langle \, \cdot \, \rangle$.

Of particular interest is the \emph{expected particle number} 
\[
\langle \hat{N} \rangle = \frac{\sum_{N=0}^\infty N e^{\beta \mu N} \,
\Tr_{N} [e^{-\beta \hat{H}}] }
{\sum_{N=0}^\infty e^{\beta \mu N} \,
\Tr_{N} [e^{-\beta \hat{H}}] }.
\] 
Observe that $\langle \hat{N} \rangle_{\beta,\mu} \ra 0$ as $\mu \ra -\infty$. Also note
that if $\dom Z = \R$, then
$\langle \hat{N} \rangle_{\beta,\mu} \ra +\infty$. 
Defining the \emph{free energy} $\Omega(\mu) := \beta^{-1} \log Z(\mu)$, we see that 
$\langle \hat{N} \rangle_{\beta,\mu} = \Omega'(\mu)$.

It is not hard to check that $\Omega$ is (strictly) convex, i.e., $\langle \hat{N} \rangle_{\beta,\mu}$ 
is increasing in $\mu$ for $\mu \in \intdom Z$. Recall that if $\dom Z = (0,\mu_{\mathrm{c}})$,
then $Z(\mu) \ra +\infty$ as $\mu \ra \mu_{\mathrm{c}}$, hence 
$\Omega(\mu) \ra +\infty$ as $\mu \ra \mu_{\mathrm{c}}$. If $\mu_{\mathrm{c}} < +\infty$, 
it follows that $\Omega'(\mu) \ra +\infty$ as $\mu \ra \mu_{\mathrm{c}}^-$. (Otherwise, 
since $\Omega'$ is increasing, it approaches a finite limit $\mu \ra \mu_{\mathrm{c}}^-$. But 
in this case it would follow from the fundamental theorem of calculus 
that $\Omega$ approaches a finite limit as well: contradiction.) In summary we 
have established that if $\dom Z$ is open, then $Z(\mu) \ra +\infty$ as $\mu \ra \mu_{\mathrm{c}}$, 
no matter whether $\mu_{\mathrm{c}}$ is finite or infinite. It follows that in this case 
$\mu \mapsto \langle \hat{N}_{\beta,\mu} \rangle$ is a bijection from $\dom Z = (-\infty, \mu_{\mathrm{c}})$ 
to $(0,+\infty)$. Thus one can \emph{select} the chemical potential $\mu$ to 
yield a predetermined expected particle number.

\subsection{Green's functions and the self-energy at finite temperature}
\label{sec:finiteGreens}

Recall our definition~\eqref{eq:intPic} of the annihilation and creation operators $a_i(t)$ and $a_i^\dagger (t)$ in the Heisenberg representation. 
Then at finite inverse temperature $\beta \in (0,\infty)$ and chemical potential $\mu \in \intdom Z$, 
the \emph{time-ordered, single-body, real-time Green's function} (which we call the \emph{Green's function} for short 
when the context is clear) is 
a function $G:\R \times \R \ra \mathbb{C}^{d\times d}$ defined by 
\[
G_{ij} (t,t') = -i\, \big\langle\, \mathcal{T}\big\{ a_i(t)  a_j^\dagger (t') \big\} \,\big\rangle_{\beta,\mu}. 
\]
 
 We can write 
 \[
 G(t,t') = G^{+} (t,t') + G^{-} (t,t'),
 \]
 where 
 \begin{equation*}
 \begin{aligned}
 i G^{+}_{ij} (t,t') = &\  \frac{1}{Z} \Tr\left[ a_{i}(t)a_{j}^{\dagger}(t') e^{-\beta(\hat{H} - \mu \hat{N})}\right] \theta(t-t'), \\ 
 i G^{-}_{ij} (t,t') = &\ \frac{\zeta}{Z}  \Tr\left[ a_{j}^{\dagger}(t') a_{i}(t) e^{-\beta(\hat{H} - \mu \hat{N})}\right] (1-\theta(t-t')),
 \end{aligned}
 \end{equation*}
 where $\theta$ is defined as above in~\eqref{eq:theta}.
 
 Once again it is easy to show that $G(t,t')$, $G^{+}(t,t')$, and $G^{-}(t,t')$ depend only on $t-t'$, so we 
 can define $G(t) := G(t,0)$, $G^{+}(t): =G^{+}(t,0)$, and $G^{-}(t): =G^{-}(t,0)$ and 
 consider these objects without any loss of information. It is then equivalent to 
 consider the Fourier transforms 
 \[
 G(\omega) := \int_\R G(t) e^{i\omega t - \eta \vert t\vert }   \,dt 
 \]
 and likewise $G^{+}(\omega)$ and $G^{-}(\omega)$ defined similarly, so 
 \[
 G(z) = G^{+}(\omega) + G^{-}(\omega).
 \]
 
 Now since $\hat{H}$ preserves particle number, we can safely diagonalize $\hat{H}$ 
 as an operator on each of the $N$-particle subspaces separately. Then the spectrum of
 $\hat{H}$ consists of the union of its spectra on the $N$-particle subspaces. 
 It follows from Assumption \ref{assumption:bosons} that $\hat{H} - \mu \hat{N}$ has a ground state, i.e., that
 its spectrum is bounded from below, for $\mu \in \intdom Z$.
 Let $m=0,1,\ldots, $ (terminating at $m=2^d$ in the case of fermions) index the 
 spectrum of $\hat{H}$, and let $\vert \Psi_m \rangle$ denote the $m$-th eigenstate. Let $N_m$ 
 be the particle number of $\vert \Psi_m \rangle$ (which is an eigenstate of $\hat{N}$), and let 
 $E_m$ be defined by $ \hat{H} \vert \Psi_m \rangle = E_m \vert \Psi_m \rangle$.
 
One can show that 
\[
G_{ij}^{+} (\omega) = \frac{1}{Z} \sum_m e^{-\beta (E_m - \mu N_m)} \big\langle \Psi_m \big\vert a_i \frac{1}{\omega - (\hat{H}- E_m) + i \eta} a_j^{\dagger}  \big\vert \Psi_m \big\rangle
\]
and
\[
G_{ij}^{-} (\omega) = \frac{-\zeta}{Z} \sum_m e^{-\beta (E_m - \mu N_m)} \big\langle \Psi_m \big\vert a_j^{\dagger} \frac{1}{\omega + (\hat{H}- E_m) - i \eta} a_i  \big\vert \Psi_m \big\rangle.
\]
Recall from~\eqref{eq:partition} that 
\[
Z = \sum_m e^{-\beta (E_m - \mu N_m)}.
\]

Now we can think of $G^{\pm}$ as the restriction to the real axis of the rational function 
$G^{\pm} : \mathbb{C} \ra \mathbb{C}^{d\times d}$ defined by 
\begin{equation*}
\begin{aligned}
G_{ij}^{+} (z) := & \ \frac{1}{Z} \sum_m e^{-\beta (E_m - \mu N_m)} \big\langle \Psi_m \big\vert a_i \frac{1}{z - (\hat{H}- E_m) } a_j^{\dagger}  \big\vert \Psi_m \big\rangle \\
G_{ij}^{-} (z) := & \ \frac{-\zeta}{Z} \sum_m e^{-\beta (E_m - \mu N_m)} \big\langle \Psi_m \big\vert a_j^{\dagger} \frac{1}{z + (\hat{H}- E_m) } a_i  \big\vert \Psi_m \big\rangle,
\end{aligned}
\end{equation*}
and we can define $G(z) := G^{+}(z) + G^{-}(z)$ accordingly to be rational on $\mathbb{C}$. 
Once again we have ignored the infinitesimal $\eta$ in this definition; the same comments made 
in Appendix \ref{sec:zero} apply here.

The self-energy is the rational function $\Sigma: \mathbb{C} \ra \mathbb{C}^{d\times d}$ defined by 
\[
\Sigma(z) := z - h - G(z)^{-1}.
\]

  \section{Non-equilibrium setting and the Kadanoff-Baym contour}
  \label{sec:non-equilibrium}
  Here we briefly discuss one main non-equilibrium setting of interest,
  called the \emph{Kadanoff-Baym} formalism.  One considers an initial time $t_0$ and 
  a final time $t_1$, with $t_1 > t_0$, and for $t \in [t_0,t_1]$, $\hat{H}(t)$ denotes the Hamiltonian 
  at time $t$. This Hamiltonian determines the evolution, starting at time $t_0$, of a prepared
   grand canonical ensemble defined by a density operator $\rho$, i.e., a positive semi-definite operator on the Fock space of unit trace. Assuming, for simplicity, strict positive definiteness, we can write 
   \[
   \rho = \frac{1}{\Tr[e^{-\beta \overline{H}}]} e^{- \beta \overline{H}}
   \]
    for some Hamiltonian $\overline{H}$ and inverse temperature $\beta$. Of course, this form leaves freedom in choosing $\beta$, but it is good to think of $\beta$ as a free parameter. Often $\overline{H}$ may be thought of as 
    $\hat{H}(t_0) - \mu \hat{N}$, but this need not be the case. To ensure that 
    Assumption \ref{assumption:traceClass} holds, it will suffice
     to assume that $\Tr[e^{-\beta \overline{H} + \ve \hat{N} }] < +\infty$ 
    for some $\ve > 0$ sufficiently small. 
    This condition is analogous to the condition 
    $\mu \in \intdom Z$ discussed in Appendix \ref{sec:finite} for the equilibrium finite-temperature ensemble. 
    Assuming the condition, let $\hat{O}_N$ 
    denote the restriction of $e^{-\beta \overline{H}}$ to the $N$-particle subspace. 
    Then it follows that $\Tr [\hat{O}_N]$ decays exponentially in $N$, hence 
    $\Vert \hat{O}_N \Vert_2$ does as well.

  Here the contour is the Kadanoff-Baym contour $\mathcal{C}^{\mathrm{KB}}$, specified by the path $\gamma^{\mathrm{KB}}$, 
  which can be written as a concatenation 
  \[
  \gamma^{\mathrm{KB}} = \gamma^{-} + \gamma^{+} + \gamma^{\mathrm{M}}.
  \]
  Here $\gamma^{-}:(0,t_1-t_0) \ra \mathbb{C}$ is defined by $s \mapsto s + t_0$, 
   $\gamma^{+}:(0,t_1-t_0) \ra \mathbb{C}$ is defined by $s \mapsto t_1 - s$, and 
    $\gamma^{\mathrm{M}}:(0,\beta) \ra \mathbb{C}$ is defined by $s \mapsto t_0 - is$. 
      Accordingly we define sub-contours, $\mathcal{C}_{\pm}$ and $\mathcal{C}_{\mathrm{KB}}$.
       The concatenation $\gamma^{\mathrm{KB}}$ is viewed as a function 
   $(s_0, s_1) \ra \mathbb{C}$, where $s_0 = 0$ and $s_1 = 2(t_1 - t_0) + \beta$.
    
  We have already defined the contour Hamiltonian $\hat{H}(z)$ for $z \in \mathcal{C}_{\pm}$. To 
  complete the specification of our ensemble we stipulate that $\hat{H}(z) = \overline{H}$ for 
  $z \in \mathcal{C}_{\mathrm{M}}$ . 
  For contour times $s,s' < t_1 - t_0$, the contour-ordered Green's function $G(s,s')$ recovers the 
  appropriate notion of the 
  real-time-ordered non-equilibrium Green's function; similarly, appropriate 
  notions of advanced and retarded Green's functions 
  can be recovered from the contour-ordered Green's function. However, only the contour-ordered 
  Green's function admits a favorable perturbation theory, and this remarkable fact is one motivation
  for considering it. See \cite{StefanucciVanLeeuwen2013} for further details. In this work we additionally see that 
  the contour-ordered setting is also the natural setting in which to recover a sparsity result for the self-energy of
  impurity problems in the non-equilibrium setting.

  Now one can readily check that the partition function is given by 
  $Z = \Tr[e^{-\beta \overline{H}}] > 0$ (so Assumption \ref{assumption:Znonzero} is satisfied). 
  Now we verify Assumption \ref{assumption:traceClass}. For $s' \leq s \leq s_1 - \beta$, note that
  $U(s,s')$ is unitary, hence bounded. Moreover, for $s_1 - \beta \leq s' \leq s$, we have  
  $U(s,s') = e^{- (s-s') \overline{H}}$, which is trace class (by our assumption), hence bounded.
  It follows that for any $s_0 \leq s' \leq s \leq s_1$, the operator $U(s,s')$ is bounded.
  In fact, 
  $U( s_1, s_1 - \beta) = e^{-\beta \overline{H}}$, and as mentioned above, the
  operator norm of this 
  operator restricted to the $N$-particle subspace decays exponentially in $N$. Thus 
  Assumption \ref{assumption:traceClass} is satisfied.

\bibliographystyle{siam}
\bibliography{sparseref}

\end{document}